\documentclass[11pt,a4paper,english]{article}
\usepackage[margin=2.5cm]{geometry}
\usepackage[T1]{fontenc}
\usepackage[utf8]{inputenc}
\usepackage{babel}
\usepackage[small]{caption}
\usepackage{graphicx}
\usepackage{amsmath}
\usepackage{amsthm}
\usepackage{booktabs}
\usepackage{algorithmicx}
\usepackage[ruled,linesnumbered]{algorithm2e}
\usepackage{bm}
\usepackage{tikz}
\usepackage{tikz-network}
% \tikzset{network x offset/.initial=1ex,network y offset/.initial=1ex,
% adjust size/.style={minimum width=width("\vertex@Label")+2*\pgfkeysvalueof{/tikz/network x offset},
% minimum height=height("\vertex@Label")+2*\pgfkeysvalueof{/tikz/network y offset}}}
% \makeatother

\newtheorem{theorem}{Theorem}
\newtheorem{definition}{Definition}
\newtheorem{lemma}{Lemma}

\newtheorem{example}{Example}

\title{Combinatorial Procurement Auction in Social Networks}
% \title{Social Network based Combinatorial Procurement Auction for Crowd-Sourcing}
% \author{
%   Yuhang Guo\\
%   \small Department of Computer Science\\
%   \small University of Electronic Science and Technology of China\\
%   \small\texttt{guoyuhang@std.uestc.edu.cn}\\
%   \and
%   Dong Hao\\
%   \small Department of Computer Science\\
%   \small University of Electronic Science and Technology of China\\
%   \small\texttt{haodongpost@gmail.com}
%   Bin Li\\
%   \small School of Computer Science & Engineering, Nanjing University of Science and Technology \\
%   \small\texttt{cs.libin@njust.edu.cn}
% }

\author{
Yuhang Guo\footnote{School of Computer Science and Engineering, University of Electronic Science and Technology of China.} 
\and  Dong Hao\footnote{School of Computer Science and Engineering, University of Electronic Science and Technology of China, Corresponding author, Contact Email: \texttt{haodongpost@gmail.com}}  \and  %
Bin Li\footnote{School of Computer Science, Nanjing University of Science and Technology.}
}
\date{}
% \footnotetext[1]{School of Computer Science, University of Electronic Science and Technology of China, Email: \texttt{guoyuhang@std.uestc.edu.cn}}
% \footnotetext[2]{Corresponding author, School of Computer Science, University of Electronic Science and Technology of China, Email: \texttt{haodongpost@gmail.com}} 
% \footnotetext[3]{School of Computer Science, Nanjing University of Science and Technology, Email: \texttt{cs.libin@njust.edu.cn}}

\begin{document}
  \maketitle
  \begin{abstract}
    % Diffusion auctions in social networks have attracted
    % considerable interests in recent years. However,
    % most existing works on diffusion auctions are limited to forward auctions where one seller sells an
    % item to buyers who want to purchase it. Moreover,
    % the existing mechanisms cannot yet handle combinations of heterogeneous items. Combinatorial reverse auction (CRA) is different from direct auction, such that one requester releases her requirement for goods or services, then suppliers compete for the business. In this paper, we novelly introduce combinatorial items into diffusion auction
    % model. We then identify that designing a monotone allocation rule on value and diffusion is the
    % key for the truthfulness of networked combinatorial
    % auction mechanisms. Based on this principle, we
    % propose novel diffusion CRA mechanisms for two
    % common scenarios: one is for multiple homogeneous items procurement and the other is for multiple heterogeneous items. These diffusion CRA
    % mechanisms are proved to be individual rational,
    % incentive compatible and weakly budget balanced.
%     Crowdsourcing can play an important role in the Internet of Things (IoT) applications for
% information sensing and gathering where the participants are equipped with geolocated devices. 
This paper studies one emerging procurement auction scenario where the market is constructed over the social networks. In a social network composed of many agents, smartphones or computers, one requester releases her requirement for goods or tasks to suppliers, then suppliers who have entered the market are also encouraged to invite some other suppliers to join and all the suppliers in the network could compete for the business. The key problem for this networked auction is about how to incentivize each node who have entered the sell not only to truthfully use her full ability, but also to forward the task to her neighbours. Auctions conducting over social networks have attracted considerable interests in recent years. However, most of the existing works focus on classic forward auctions. Moreover, there is no existing valid networked auction considering multiple goods/tasks. This work is the first to explore procurement auction for both homogeneous and heterogeneous goods or tasks in social networks.  From both theoretical proof and experimental simulation, we proved that the proposed mechanisms are proved to be individual-rational and incentive-compatible, also both the cost of the system and the requester could get decreased.
    \vspace{0.4cm}
    
  \noindent \textbf{Keywords}: Social Networks, Multi-Attribute Auction, Procurement Auction, Networked Market, Crowd-Sourcing, Crowd-Sensing. 
  \end{abstract}

  \newpage

\section{Introduction}

Auction is a fundamental paradigm for realizing desirable resource allocation, which has been applied in various realistic scenarios. Classic auctions may focus on conducting desired social choices in a fixed group of participants who can be directly informed by the organizer. Conventionally, they never take the underlying social interactions between participants into account. However, in most human society and computer systems, network plays a central role. Very recently, auction design in social networks has been a promising direction \cite{ijcai2021-605}, and this research field novely and finely integrates the two fundamental disciplines: mechanism design and networks. By introducing networks and social interactions between agents into traditional auction theory, the main difficulty lies in the conflict that from one aspect, the organizer wishes to attract more people to join the auction in order to increase her revenue, from the other side, participants would be unwilling to bring more competitors into the market, which only makes herself more difficult to win the auction. So networked auction mechanisms concern about how to incentivize participants not only to truthfully bid but also to invite their friends, which can be used to simultaneously improve both the social welfare and the seller's revenue. Past few years have witnessed considerable progress to tackle the underlying contradiction between system optimality and the individuals' self interests in the networked auction area.

% complement for introduction 
 For networked auction with one item for sale, the pioneer work was proposed in  \cite{li2017mechanism}, which proved that the
 state-of-art VCG mechanism could incur the organizer terrible deficit in extreme network cases and further formulated the novel IDM mechanism. IDM is proved to be the first work to satisfy incentive-compatible while guaranteeing non-negative revenue for the organizer, leading the way for follow-up works. Later, various single item diffusion auction mechanisms  \cite{li2019diffusion,zhangECAIincentivize,zhang2020redistribution,LI2022103631,jeong2020groupwise} focusing on different attributes and different scenarios have also been proposed. CDM proposed by  \cite{li2019diffusion} formulates one mechanism family satisfying great properties where IDM could be considered as one special case. For single-item networked auctions with transferring cost, CSM \cite{li2018customer} and WDM \cite{li2019diffusion} give solutions in this domain. Moreover, FDM \cite{zhangECAIincentivize} and NRM \cite{zhang2020redistribution} study redistribution mechanisms issues over social networks, where more participants are rewarded for propagating auction information, which could help greatly in the expansion of the market in practice.

Although there has been emerging interests in diffusion auction design, most of the works focus on single-item auctions over social networks, this paper is the first attempt to extend diffusion auctions into a setting of combinatorial items. The study of traditional combinatorial auctions has received much academic attention  \cite{cramton2004combinatorial}. Traditional combinatorial auctions study the market with a large number of items. Buyers in combinatorial auctions are allowed to express their own preferences on different number or bundle of items. Spectrum auctions  \cite{cramton1997fcc,cramton2000collusive} is one of the most famous applications in this area, by which a large number of licenses have been successfully sold. Another common application of combinatorial auction is the procurement market  \cite{cramton2004combinatorial}. The purchase or procurement by governments, corporations, social organizations or individuals have always been an important part of market trading. Reverse auction or procurement auction is one of the most common method these entities take for procuring goods or tasks.

For example of reverse auctions, consider the procurement in transportation issues  \cite{bichler2020strategyproof}: one corporation wants to establish logistic routes between different sites. Doing this work by herself is very costly, thus a better way is to procure transportation services from different suppliers who can provide routes between these sites. For each supplier, the cost of providing a bundle of routes could be different due to the structure of demanded route. So solving such a complex combinatorial auction problem for corporations could help them save large amount of money. Another scenario is called crowdsensing  \cite{zhang2015truthful,yang2015incentive}, which has become more popular with the advance of Internet technology and smart phones. In crowdsensing, social organizations or individuals can broadcast their procurement requirements for some goods or tasks via social media, then people nearby who receive the procurement message can provide the corresponding tasks and charge some fees to the crowdsensing launcher. 
%Usually, it is formalized as follows: a set of $m$ indivisible goods are auctioned among $n$ bidders, each bidder has a valuation function to describe her preferences on different bundle of items. 

% multi-unit diffusion auctions 
% budget feasible mechanism over graph 
% data acquisition over social networks 
% crowdsensing / contest 
In this paper, we are going to focus on two forms of combinatorial reverse auctions (CRA), one is multi-unit reverse auctions with additive valuation bidders and the other is heterogeneous goods procurement with single-minded bidders. When extending single item diffusion auction into multiple items, even for single demand bidders, designing truthful mechanisms become rather complicated  \cite{zhao2021,ijcai2021-605}. 
To date, Zhao \textit{et al.} \cite{zhao2019sellmultipleitems} and Kawasaki \textit{et al.}  \cite{kawasaki2020strategy} are the only two existing works that try to solve this task. 
% For all above diffusion mechanisms mentioned, they all focused on the seller-centric diffusion auctions. 
From the perspective of buyer-centric auction models, Liu \textit{et al.} \cite{Liu_Wu_Li_Wang_2021} firstly studied procurement auctions via social networks under certain budget with logarithmic approximation procured value. 

When extending diffusion auction into combinatorial settings, this task becomes complicated since the bidding is tightly coupled with each participant's diffusion strategy.
% Even considering multi-unit with single demand auctions which can be handled with ease in tradition auctions, it encounters great challenges. 
In our paper, we point out the drawbacks of the existing works on multiple item diffusion auctions, and analyse that the core problems lies in designing monotone allocation rules. We reveal that, to design an incentive compatible multi-unit auction or even combinatorial auction, a monotone allocation is necessary.  
% Current existing works on combinatorial diffusion auctions mainly focus on the multi-unit with single demand problem which can be handled with ease since it does not go beyond the single-parameter environment. we are the first to extend combinatorial reverse auctions into social networks. Considering the complexity of designing mechanisms under multiple item diffusion auctions, we analysis the core difficulties, point out the disadvantages of the existing multiple item diffusion mechanism. 
% According to the distance from the requester to each supplier, we divide the networked market into many sub-markets, each consists of suppliers with the same shortest distance to requester. 
To tackle the allocation monotonicity for auctions with combinatorial items, we divide the networked market into independent sub-markets according to the graph structure. We then design one allocation rule which is both valuation-monotone and diffusion-monotone. After that, we derive the payment of each agent as her critical winning bid, which finally leads to a truthful mechanism. In addition, we also prove our mechanisms have some other good properties.

Section 2 introduces how we model combinatorial reverse auctions in social networks. Section 3 analyses why in multi-unit diffusion auctions, a non-monotone allocation could bring big troubles for truthfulness. Then in section 4 and 5, we propose novel diffusion auction mechanisms.
%which are known to be the first attempt to solve auction for combinatorial items in social networks. 
% Both of the two mechanisms are proved to satisify some desirable properties. Also, our mechanisms can be applied into many other  procurement problems, such as grid computing, federated learning, etc. 

\section{Networked Reverse Auction Model}
Consider a social network $G=(V,E)$. The node set $V=N\cup \{p\}$ represents all agents in a market where there is a single requester $p$ and other nodes are suppliers $N$. Set $E$ contains the directed edges between any two agents, and an directed edge represents that the agent on the tail can contact the one on the head. 

% type for the requester
For the requester $p$, denote her type file as $t_p=(\boldsymbol{\tau},\bar{\boldsymbol{v}})$, where $\boldsymbol{\tau}$ is the set of her procurement tasks while $\bar{\boldsymbol{v}}$ represents the set of her reserve costs for finishing each task by herself. The requester can either finish the tasks by herself or outsource them to other agents. The reserve cost $\bar{\boldsymbol{v}}$ is just the cost that the requester finishes these tasks by herself. Note that the mechanism and theory in this paper apply to any social network and the requester can be an arbitrary node. 

% type for the supplier
For any supplier $i\in N$, denote her profile as $t_i=(a_i,c_i,r_i)$ where $a_i$ is the tasks that she can finish, $c_i$ is her (unit) supply cost and $r_i$ contains those suppliers who $i$ can inform about the procurement launched by requester $p$. Let $\boldsymbol t$ be the type profile from all suppliers and let ${\boldsymbol c}_N$ be the profile of costs of all suppliers. A supplier can contact other suppliers and ask them to join in the procurement.
We consider the scenario that $a_i$ is public knowledge, while $c_i,r_i$ are private information which could be manipulated by $i$ herself. We also assume that the requester herself who can finish the posted procurement goods or tasks with her own reserved cost, but she wants to reduce the expenditure she crowdsources the tasks into the social network. We call the requester herself the \textit{virtual supplier}, and denote it as $\phi$ where $t_{\phi}=(\boldsymbol{\tau},\bar{\boldsymbol{v}},\emptyset)$.  

For scenarios as described above, we say the requester and the suppliers engage in a networked reverse auction.
A networked reverse auction mechanism (abbr., RAN) $\mathcal{M}=(\boldsymbol{\pi},\boldsymbol{x})$ consists of an allocation policy $\boldsymbol{\pi}=\{\pi_i\}_{i\in N}$ and a payment rule $\boldsymbol{x}=\{x_i\}_{i\in N}$. According to the allocation and payment, for each supplier $i$, define her utility as $u_i(\boldsymbol{t},\mathcal{M})=x_i - c_i$. For any supplier $i$, define $d(p,i)$ as the \textbf{shortest distance} from the requester $p$ to the supplier $i$. To characterize networked reverse auctions, the following basic definitions are needed. 

% \begin{definition}
% In the combinatorial reverse auction, for any agent $i$, we denote her utility as $u_i(\boldsymbol{t},\mathcal{M})=x_i - c_i$.
% \end{definition}

% \begin{definition} 
% For any supplier $i$, define $d(p,i)$ as the shortest distance from the requester $p$ to the supplier $i$. 
% \end{definition}

\begin{definition} 
A networked reverse auction $\mathcal{M}=(\boldsymbol{\pi},\boldsymbol{x})$ is individual rational if  $\forall i\in N$, $u_i(t_i,\boldsymbol{t}_{-i}^\prime)\geq 0$.
\end{definition}

\begin{definition} 
A networked reverse auction $\mathcal{M}=(\boldsymbol{\pi},\boldsymbol{x})$ is incentive compatible if $\forall i\in N$, $u_i(t_i,\boldsymbol{t}_{-i}^\prime)\geq u_i(t_i^\prime,\boldsymbol{t}_{-i}'')$.
\end{definition}

Note that on the right side of the inequality, $(t_i,\boldsymbol{t}^\prime_{-i})$ is replaced by $(t_i^\prime,\boldsymbol{t}^{\prime\prime}_{-i})$ since when $i$ changes her type, such as not inviting all of her neighbors, then some suppliers who can enter the market under $t_i$ may no longer receive the procurement information. Thus, when under $t_i^\prime$, we would change the type profile of other suppliers except $i$ from $\boldsymbol{t}^\prime_i$ to $\boldsymbol{t}^{\prime\prime}_i$.

Individual rationality (IR) guarantees that any supplier who truthfully reports her type file will never suffer from deficit; Incentive compatibility (IC) represents that for all suppliers, truthfully submitting her type (i.e., both truthfully bidding and inviting \textit{all} her friends) dominates strategically reporting any other type in her strategy space. Furthermore, we have the following notion about the requester's cost. 
\begin{definition}
A networked reverse auction $\mathcal{M}=(\boldsymbol{\pi},\boldsymbol{x})$ is weakly budget balanced if the total cost for the requester to hire some suppliers never surpass her total reversed cost. That is, $u_p=\sum_{\tau_i\in \boldsymbol{\tau}}\bar{v}(\tau_i)-\sum_{i\in N}x_i\geq 0$.
\end{definition}

In a networked reverse auction, for any two different types $t_i^1=(a_i^1,c_i^1,r_i^1)$ and $t_i^2=(a_i^2,c_i^2,r_i^2)$, we define the \textbf{partial ordering} over types $t_i^1\succ t_i^2$ \textit{iff} $c_i^1\leq c_i^2\wedge r_i^1\subseteq r_i^2$. This partial ordering portrays that for any supplier, reporting a lower cost and diffusing information to fewer friends is more preferable, because it makes an agent more competitive in a crowd  \cite{li2020incentive}. Taking into consideration of this partial ordering, the allocation monotonicity is defined as follows.

\begin{definition}\label{allocation-monotonic}
An allocation policy $\boldsymbol{\pi}$ is monotone if $\forall t_i^1\succ t_i^2,\boldsymbol{\pi}(t_i^1,\boldsymbol{t}_{-i})\geq \boldsymbol{\pi}(t_i^2,\boldsymbol{t}_{-i})$.
\end{definition}
By this definition, the monotonicity of allocation rule should not only cover the valuation, but also consider the diffusion action space. Lower cost with inviting fewer friends leads to a higher probability to win. For this overall allocation monotonicity, we decompose it with respect to the value domain and the diffusion domains in definition \ref{value-monotone} and definition \ref{diffusion-monotone}, corresponding to the agents' report cost the the invitation set. 

\begin{definition}[Value-Monotone Allocation]\label{value-monotone}
An allocation policy in networked reverse auction $\boldsymbol{\pi}$ is value-monotone if $\forall (a_i^1,c_i^1,r_i^1)\succ (a_i^1,c_i^2,r_i^1)$ where $c_i^1 \leq c_i^2$, $\boldsymbol{\pi}((a_i^1,c_i^1,r_i^1),\boldsymbol{t}_{-i}) \geq \boldsymbol{\pi}((a_i^1,c_i^2,r_i^1),\boldsymbol{t}_{-i}) $.
\end{definition}

\begin{definition}[Diffusion-Monotone Allocation]\label{diffusion-monotone}
An allocation policy in networked reverse auction $\boldsymbol{\pi}$ is diffusion-monotone if $\forall (a_i^1,c_i^1,r_i^1)\succ (a_i^1,c_i^1,r_i^2)$ where $r_i^1 \subseteq r_i^2$, $\boldsymbol{\pi}((a_i^1,c_i^1,r_i^1),\boldsymbol{t}_{-i}) \geq \boldsymbol{\pi}((a_i^1,c_i^1,r_i^2),\boldsymbol{t}_{-i}) $.
\end{definition}
 
Intuitively, these two definitions about monotonicity reflect that, for each agent, when fixed the invitation domain, reporting lower working cost would not result in worse allocation results. On the other hand, when fixed the cost reporting domain, inviting fewer friends would not result in worse allocation results.

% \begin{definition}
% The virtual supplier is defined as $\phi$ where $t_{\phi}=(\boldsymbol{\tau},\bar{\boldsymbol{v}},\emptyset)$
% \end{definition}

Besides the monotonicity, we here define that an allocation is \textbf{feasible} if under this allocation, all procurement requirement get satisfied either from suppliers or from requester herself. That is, an allocation policy $\boldsymbol{\pi}$ is \textit{feasible} iff $\cup_{i\in N\cup\{\phi\}}\pi_i=\boldsymbol{\tau}$. Moreover, we define optimal allocation as follows.

\begin{definition}\label{definition:optimal_allocation}
For networked reverse auction mechanisms, let $\Pi$ be the set of feasible allocation policies. A feasible allocation $\boldsymbol{\pi}^\ast$ is \textbf{optimal} if it minimizes the total social cost, i.e.,
% $\boldsymbol{\pi}^\ast=\arg\min_{\boldsymbol{\pi}^\prime\in \Pi }\sum_{i\in N}\boldsymbol{\pi}^\prime c_i.$
$\boldsymbol{\pi}^\ast=\arg\min_{\boldsymbol{\pi}^\prime\in \Pi } \boldsymbol{\pi}^\prime(\boldsymbol{t}) \cdot {\boldsymbol{c}}_N$, where $\boldsymbol{c}_N=(c_1,\cdots,c_N)$.  
\end{definition}

\begin{definition}
Given an allocation rule $\boldsymbol{\pi}$ and all other suppliers' report type $\mathbf{t}_{-i}^\prime$, supplier $i$'s \textbf{critical winning cost} with $r_i^\prime\subseteq r_i$ is $c_i^\ast=\arg\max_{c_i^\prime}\boldsymbol{\pi}_i((c_i^\prime,r_i^\prime),\boldsymbol{t}_{-i}^\prime) = 1$.
\end{definition}

For any supplier $i$, when fixed all other suppliers' type $\boldsymbol{t}^\prime_{-i}$ and $i$'s diffusion dimension $r_i^\prime$, $i$'s \textbf{critical winning cost} $c_i^\ast$ is the maximum cost that she can report to be selected for the business.

\begin{definition}
For each supplier $i$, she could decide some other suppliers' participation by taking different diffusion strategies. We denote those suppliers who are dominated by supplier $i$ as $i$'s children by $\mathcal{D}_i$ where $\mathcal{D}_i=N(r_i)\setminus N(\emptyset)$. Note $N(r_i)$ represents the suppliers set when $i$ invites all of her friends while $N(\emptyset)$ represents the suppliers set when $i$ invites no friend. 
\end{definition}

% \section{Failure of Non-Monotone Mechanism}
\section{VCG and Diffusion Auction Mechanisms}
In this section, firstly, we illustrate that the VCG mechanism can be applied to our scenario but has a fatal flaw which makes it impossible to implement in reality. Next, we would explore that there exists great gap between single item diffusion auctions and multiple ones. Further, we clarify that non-monotone allocation rules of some multi-unit diffusion auction mechanisms will make the payment rule design intractable. Then we propose that the key issue for designing diffusion forward or reverse auction mechanisms is to find one monotone allocation rule and identify the corresponding payment according to the critical winning bid for each agent. 

% \subsection{Failure of VCG Mechanism}
\subsection{The Dilemma of Multi-unit Networked Auctions}
When conducting conventional forward auction scenarios over social networks,  Li \textit{et al.} \cite{li2017mechanism} has pointed that the state-of-art VCG mechanism would encounter terrible deficit over some specific social networks in single item diffusion auction, i.e. the seller could sell the item with high social welfare but get negative revenue, which is unacceptable in reality. When it comes to our procurement scenarios with multiple items, this phenomenon is further worsened. We would further mention more details in the simulation section. 

Note that we define the allocation monotonicity over diffusion auctions in definition \ref{value-monotone} and definition \ref{diffusion-monotone}. Intuitively, value-monotonicity requires that for any agent $i$, when fixing $\boldsymbol{t}_{-i}$ and $r_i$, reporting lower cost (higher bid in forward diffusion auctions) would never make worse allocation; diffusion-monotoncity requires that for any supplier $i$, when fixing $\boldsymbol{t}_{-i}$ and $c_i$, inviting fewer suppliers (In both forward/reversed diffusion auctions) would never make worse allocation. 

For single-item diffusion auctions, the pioneer work IDM (Information Diffusion Mechanism) introduced in \cite{li2017mechanism} gives one direction, Algorithm.\ref{idm} shows how this mechanism works. 

{\small{\begin{algorithm}
\SetAlgoLined
\SetKwInOut{Input}{Input}\SetKwInOut{Output}{Output}
\Input{$G=(N\cup\{s\},E),\boldsymbol{t}=(t_1,\cdots,t_N),\forall i\in N,t_i=(b_i,r_i)$}
\Output{$\boldsymbol{\pi},\boldsymbol{x}$}
Locate the highest bid agent $m$ in market $G$\;
Calculate the critical diffusion sequence $\boldsymbol{\ell}=(1,2,\cdots,m)$\;
\For{$i \in \boldsymbol{\ell}$}
{
\eIf{$b_i=\max\{b_j\},\forall j\in \boldsymbol{t}_{-(i+1)}$}
{
$\boldsymbol{\pi}_i\leftarrow 1$; $x_i\leftarrow b^\ast_{-i}$\;
\textbf{Break}\;
}
{
$\boldsymbol{\pi}_i\leftarrow 0$; $x_i\leftarrow b^\ast_{-i}-b^
\ast_{-(i+1)}$
}
}
\caption{Information Diffusion Mechanism in Single-item Diffusion Auction}
\label{idm}
\end{algorithm}}}

Intuitively, IDM makes allocation along the \textit{critical diffusion sequence}\footnote{See \cite{li2017mechanism} for detailed definition of critical diffusion sequence.} and allocate the item to the first agent whose bid is highest when not diffusing to the next on-sequence agent. The winner pays the highest bid without her self's participation while other on-sequence agents could get some reward. It is not hard to verify that the allocation rule satisfies both value-monotonicity and diffusion-monotonicity. However, when applying the IDM to multi-unit diffusion auctions, even in unit-demand settings, the monotonicity gets broken. Here the toy example gives inspiration. 

\begin{figure}[!htbp]
    \centering
\scalebox{0.8}{
\begin{tikzpicture}
    % \tikzset{mynode/. style={ellipse,minimum width=1.6cm,minimum height=1cm}}
    \Vertex[color=yellow,size=1, x=-0.5, y=0,fontscale=1.4, label=$s$]{p}
    % \Vertex[color=white, x=2, y=2, label=$\begin{array}{c} 12 \\ \{b,d,e,h,j\} \end{array}$, shape=ellipse,
    % style={minimum width=2cm}]{s1}
    % \node [mynode] at (1,1) {$1,2$};
    \Vertex[color=white, x=1, y=1, size=1, label=$3$,fontscale=1.3
    ]{a}
    \Vertex[color=white, x=1, y=-1, size=1, label=$1$, fontscale=1.3
   ]{b}
    \Vertex[color=white, x=2.8, y=-1, size=1, label=$2$, fontscale=1.3
    ]{c}
    \Vertex[color=white, x=4.6, y=-1.8, size=1, label=$5$,fontscale=1.3 ]{d}
    \Vertex[color=white, x=4.6, y=0, size=1, label=$10$,fontscale=1.3,RGB,color={127,201,127} ]{e}
    \Vertex[color=white, x=2.8, y=1,size=1, label=$9$,fontscale=1.3]{f}
    \Edge[lw=1,Direct](p)(a)
    \Edge[lw=1,Direct](p)(b)
    \Edge[lw=1,Direct](b)(c)
    \Edge[lw=1,Direct](c)(d)
    \Edge[lw=1,Direct](c)(e)
    \Edge[lw=1,Direct](a)(f)
    \Text[x=-3.3,y=0,fontsize=\large,color=red,style={draw,rectangle}]{Allocate $2$ unit goods}
    \Text[x=1,y=0.3]{$a$}
    \Text[x=1,y=-0.3]{$b$}
    \Text[x=2.8,y=-0.3]{$c$}
    \Text[x=2.8,y=0.3]{$d$}
    % \Text[x=3.7,y=-2.2]{$d^\prime$}
    \Text[x=5.3,y=0]{$e$}
    \Text[x=5.3,y=-2]{$f$}
    % \node[x=3.7,y=-2.1] (D) {};
    % \node[x=]
\end{tikzpicture}
}
    \caption{Multi-unit diffusion auction with single-unit demand bidders: $s$ represents the seller who has two unit item for sale, there exist $6$ agents $\{a,b,c,d,e,f\}$ under the market.}
    \label{idm_case}
\end{figure}
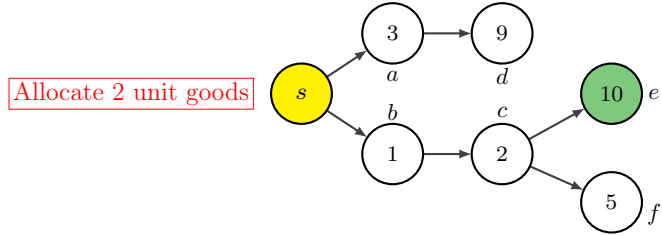

Considering applying IDM under this scenario, the most direct way is to allocate items one by one, firstly, we

\subsection{Counterexample for Existing Diffusion Auction Mechanisms}
In Alg.\ref{sm1}, we devise a multi-unit reverse auction mechanism for information diffusion scenario. This mechanism instructs that, following the ranking of $d(p,i)$, for each node $i$, her allocation decides on the optimal allocation in Definition \ref{definition:optimal_allocation} when $i$'s children are removed (children is defined as $\mathcal{D}_i=N(r_i)\setminus N(\emptyset)$ (line 6 in Alg.\ref{sm1}), i.e, those agents who can not enter this market without $i$'s invitation). Her payment equals the social cost increment due to her absence (also with her children) and only her children's absence. Intuitively, this mechanism can be seen as one extension on VCG mechanism. However, we implement this algorithm into the following counter example and explain that its allocation violates monontonicity in Definition \ref{allocation-monotonic} and this mechanism is not IC.

{\small{\begin{algorithm}
\SetAlgoLined
\SetKwInOut{Input}{Input}\SetKwInOut{Output}{Output}
\Input{$G=(V,E),\boldsymbol{t}_p=(\tau,\bar{v}),\boldsymbol{t}=(t_1,\cdots,t_N)$}
\Output{$\boldsymbol{\pi},\boldsymbol{x}$}
Reorder agents by distance $d(p,1)\prec \cdots\prec d(p,N)$\;
\For{$i\leftarrow 1$ \KwTo $N$}
{
\If{$\tau\leq 0$}{\textbf{Break}}

$\mathcal{D}_i\leftarrow N(r_i)\setminus N(\emptyset)$\;
$\boldsymbol{\pi}^\ast({\boldsymbol{t}_{-\mathcal{D}_i}})\leftarrow\arg\min _{\boldsymbol{\pi}^\prime}\boldsymbol{\pi}^\prime(\boldsymbol{t}_{-D_i}) \boldsymbol{c}_{-\mathcal{D}_i}$\;

$\boldsymbol{\pi}^\ast({\boldsymbol{t}_{-i}})\leftarrow\arg\min _{\boldsymbol{\pi}^\prime}\boldsymbol{\pi}^\prime(\boldsymbol{t}_{-i}) \boldsymbol{c}_{-i}$\;
% $\boldsymbol{\pi}^\ast(\boldsymbol{t}_{-i})\leftarrow\arg\min _{\boldsymbol{\pi}^\prime}\sum_{i\in N_{-i}}\boldsymbol{\pi}^\prime_i c_i$

$\pi_i\leftarrow \boldsymbol{\pi}^\ast_i({\boldsymbol{t}_{-\mathcal{D}_i}})$,$\tau \leftarrow \tau - \pi_i$\;

% $x_i=SC^\ast(\boldsymbol{t}_{-i})-(SC^\ast(t_i(r_i'=\emptyset),\boldsymbol{t}_{-i})-\pi_i c_i)$\;
% \[
% x_i\leftarrow \sum_{i\in N_{-i}}\boldsymbol{\pi}_{N_{-i}}^\ast(i) c_i - (\sum_{i\in N_{-\mathcal{D}_i}}\boldsymbol{\pi}_{N_{-\mathcal{D}_i}}^\ast(i) c_i - \pi_i c_i)
% \]
$x_i\leftarrow \boldsymbol{\pi}^\ast(\boldsymbol{t}_{-i})\boldsymbol{c}_{{-i}} - (\boldsymbol{\pi}^\ast(\boldsymbol{t}_{-\mathcal{D}_i})\boldsymbol{c}_{{-\mathcal{D}_i}}-\pi_i c_i)$\;

}
\caption{A Non-Monotone Diffusion Auction}
\label{sm1}
\end{algorithm}}}

% For each supplier $i$, imaging when $r_i$ change to $\emptyset$, if supplier $i$ could be chosen into social cost minimization allocation, allocate the corresponding number of goods/servers to her. Her payment equals the social cost increment due to her absence and her misreport $r_i^\prime=\emptyset$. 

% Here we give one example where the mechanism in Alg.\ref{sm1} violates truthfulness. 
\begin{example}\label{exm1}

In Fig.\ref{example-1}, the circle node represents the requester $p$ and each oval node represents one supplier. In each oval node, the number on the left is the supplier's unit cost $c_i$, while the number on the right represents number of tasks she can finish. the requester $p$ has a procurement task of $4$ units of goods. Initially, she can only invites supplier $a$ and $b$. Then the market can be expanded due to that $b,c,d$ can spontaneous diffuse the procurement information. Implementing Alg.\ref{sm1}, green oval nodes are the selected suppliers.  
\end{example}
\begin{figure}[h]
	    \centering
    	  \includegraphics[scale=1]{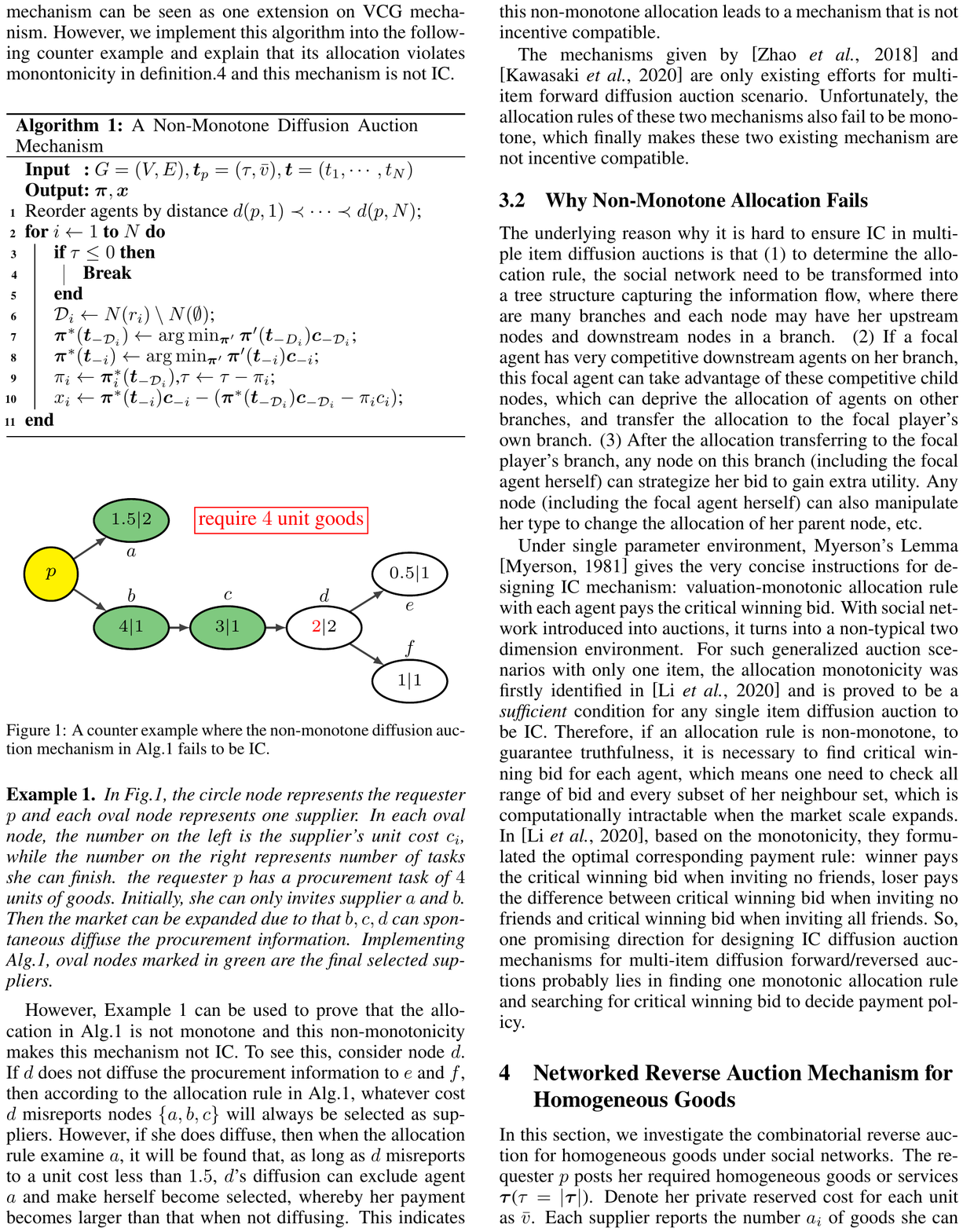}
    \caption{A counter example where the non-monotone diffusion auction mechanism in Alg.\ref{sm1} fails to be IC.}
    \label{example-1}
\end{figure}
However, Example \ref{example-1} can be used to prove that the allocation in Alg.\ref{sm1} is not monotone and this non-monotonicity makes this mechanism not IC. 
%
%However, this allocation is not monotone regarding Definition \ref{allocation-monotonic}. 
%
To see this, consider node $d$. If $d$ does not diffuse the procurement information to $e$ and $f$, then according to the allocation rule in Alg.\ref{sm1}, whatever cost $d$ misreports nodes $\{a,b,c\}$ will always be selected. However, if she does diffuse, then when the allocation rule examines $a$, it will be found that, as long as $d$ misreports to a unit cost less than $1.5$, $d$'s diffusion can exclude agent $a$ and make herself become selected, whereby her payment becomes larger than that when not diffusing. This indicates this non-monotone allocation leads to a mechanism that is not IC.

The mechanisms given by   \cite{zhao2019sellmultipleitems} and   \cite{kawasaki2020strategy} are only existing efforts for multi-item forward diffusion auction scenario. Unfortunately, the allocation rules of these two mechanisms also fail to be monotone, which finally makes these two existing mechanism be not incentive compatible. Here we give the specific counterexample that they fail to be incentive compatible. 

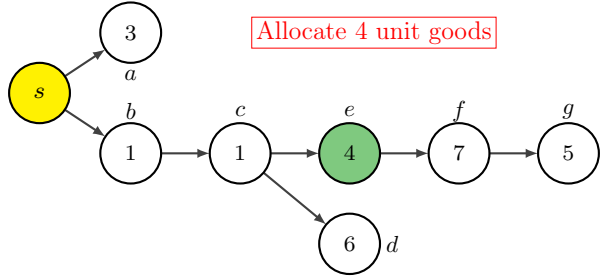
\begin{figure}[!htbp]
    \centering
\scalebox{0.8}{
\begin{tikzpicture}
    % \tikzset{mynode/. style={ellipse,minimum width=1.6cm,minimum height=1cm}}
    \Vertex[color=yellow,size=1, x=-0.5, y=0,fontscale=1.4, label=$s$]{p}
    % \Vertex[color=white, x=2, y=2, label=$\begin{array}{c} 12 \\ \{b,d,e,h,j\} \end{array}$, shape=ellipse,
    % style={minimum width=2cm}]{s1}
    % \node [mynode] at (1,1) {$1,2$};
    \Vertex[color=white, x=1, y=1, size=1, label=$3$,fontscale=1.3
    ]{a}
    \Vertex[color=white, x=1, y=-1, size=1, label=$1$, fontscale=1.3
   ]{b}
    \Vertex[color=white, x=2.8, y=-1, size=1, label=$1$, fontscale=1.3
    ]{c}
    \Vertex[color=white, x=4.6, y=-1, size=1, label=$4$, fontscale=1.3,RGB,color={127,201,127}]{d}
    \Vertex[color=white, x=6.4, y=-1, size=1, label=$7$,fontscale=1.3]{e}
    \Vertex[color=white, x=4.6, y=-2.5, size=1,label=$6$,fontscale=1.3]{f}
    \Vertex[color=white, x=8.2, y=-1, size=1, label=$5$,fontscale=1.3]{g}
    
    \Edge[lw=1,Direct](p)(a)
    \Edge[lw=1,Direct](p)(b)
    \Edge[lw=1,Direct](b)(c)
    \Edge[lw=1,Direct](c)(d)
    \Edge[lw=1,Direct](d)(e)
    \Edge[lw=1,Direct](c)(f)
    \Edge[lw=1,Direct](e)(g)
    \Text[x=5,y=1,fontsize=\large,color=red,style={draw,rectangle}]{Allocate $4$ unit goods}
    \Text[x=1,y=0.3]{$a$}
    \Text[x=1,y=-0.3]{$b$}
    \Text[x=2.8,y=-0.3]{$c$}
    \Text[x=4.6,y=-0.3]{$e$}
    % \Text[x=3.7,y=-2.2]{$d^\prime$}
    \Text[x=6.4,y=-0.3]{$f$}
    \Text[x=8.2,y=-0.3]{$g$}
    \Text[x=5.3,y=-2.5]{$d$}
    % \node[x=3.7,y=-2.1] (D) {};
    % \node[x=]
\end{tikzpicture}
}
    \caption{Counterexample for GIDM and DNA-MU: the yellow circle represents the seller who has $4$ items for sale. Initially, she invites $a,b$ to join the market, then $b$ invites $c$, etc. Finally, this networked market gets formed. Number in each white circle is the valuation of each bidder.}
    \label{fig1}
\end{figure}

This example was firstly proposed in   \cite{takanashi2019efficiency} to prove the failure of truthfulness in GIDM. However, we carefully implement the mechanism under DNA-MU and find the same problem for truthfulness. Firstly, we describe the procedures of these two mechanisms briefly and point out where they violate the truthfulness property. For GIDM, running procedures can be checked in Section.4 in   \cite{takanashi2019efficiency}. So we will not go into too much details and directly we give the final allocation: winners set is: $\{b,c,d,e\}$. Next, we give the running process of DNA-MU (All of our notations are consistent with that in   \cite{kawasaki2020strategy}): 

\begin{itemize}
\item[1.]
  Determine the priority order by distance:
  \(a(1) \succ b(1) \succ c(2) \succ d(2) \succ e(2) \succ f(3) \succ g(4)\).
\item[2.]
  Initialize winner set as \(W=\emptyset\) and then decide bidders' allocation
  and payment:

  For agent \(a\),
  \(p_a=v^\ast(\hat{N}_{-a}\setminus W,k^\prime)=v^\ast(\hat{N}_{-a}\setminus\emptyset,4)=4 > 3\).
  \(f_a=0\).

  For agent \(b\),
  \(p_b=v^\ast(\hat{N}_{-b}\setminus W,k^\prime)=v^\ast(\hat{N}_{-b}\setminus \emptyset,4)=0 < 1\).
  \(f_b=1,W\leftarrow W\cup\{b\}, k^\prime \leftarrow k^\prime -1\).

  For agent \(c\),
  \(p_c=v^\ast(\hat{N}_{-c}\setminus W,k^\prime)=v^\ast(\hat{N}_{-c}\setminus \{b\},3)=0 < 1\).
  \(f_c=1,W\leftarrow W\cup\{c\},k^\prime \leftarrow k^\prime -1\).

  For agent \(d\),
  \(p_d=v^\ast(\hat{N}_{-d}\setminus W,k^\prime)=v^\ast(\hat{N}_{-d}\setminus \{b,c\},2)=5 < 6\).
  \(f_d=1,W\leftarrow W\cup\{d\},k^\prime \leftarrow k^\prime -1\).

  For agent \(e\),
  \(p_e=v^\ast(\hat{N}_{-e}\setminus W,k^\prime)=v^\ast(\hat{N}_{-e}\setminus \{b,c,d\},1)=3 < 4\).
  \(f_e=1,W\leftarrow W\cup\{e\},k^\prime \leftarrow k^\prime -1\).

  \(k\) decremented to \(0\) and the price for agent \(f\) and \(g\)
  become infinity.
 \item[3.] Final winners are $W=\{b,c,d,e\}$.
\end{itemize}

This two mechanisms output the same allocation under bidders truthfully reporting, however, when $f$ misreports her $r_f^\prime$ from $\{g\}$ to $\emptyset$. Then the allocation changes into $\{a,b,c,f\}$ where $f$ gains extra payoff by changing herself from loser to winner. Thus, for this two mechanisms, they are both not truthful. 

Further, we could check that both of the two mechanisms' allocation rules are not monotone. Note the bidder $e$ marked in green: when reporting $r_e^\prime = r_e$, she would be a winner as if reporting $v_e^\prime \geq 3$; when reporting $r_e^\prime =\emptyset$, she would be a winner as if reporting $v_e^\prime \geq 6$. So here we can see that inviting fewer friends makes agent $e$ more difficult to be one winner, which contradict the allocation monotonicity: reporting higher bid and diffusing information to fewer friends is more preferable to be a winner. Because of the non-monotonicity of agent $e$, her children agent $f$ has the possibility to gain extra utility by manipulating her own reporting type.

\subsection{Why Non-Monotone Allocation Fails}
The underlying reason why it is hard to ensure IC in multi-item diffusion auctions is that (1) to determine the allocation rule, the social network needs to be transformed into a tree structure capturing the information flow, where there are many branches and each node may have her upstream nodes and downstream nodes in a branch. (2) If a focal agent has very competitive downstream agents on her branch, this focal agent can take advantage of these competitive child nodes, which can deprive the allocation of agents on other branches, and transfer the allocation to the focal player's own branch. 
% (3) After the allocation transferring to her own branch, optimizing utility for herself 
% the focal agent have further incentive to strategize on her reported value and get the item.
% (3) After the allocation transferring to the focal player's branch, non-monotonicity allocation rule complicates the allocation on this branch: root node can manipulate her bid to gain extra utility, children nodes could manipulate her type to change the allocation of root node, etc.
(3) After the allocation transferring to the focal player's branch, any node on this branch (including the focal agent herself) can strategize her bid to gain extra utility. Any node (including the focal agent herself) can also manipulate her type to change the allocation of her parent node, etc.

% hat is not in the upstream of the focal agent (e.g. agent $d$ in Fig.\ref{example-1} can take the advantage that her child nodes $e$ and $f$ have lower costs, whereby as long as $d$ diffuses, the original allocation on $a$ will be transferred to $d$'s branch and increase $d$'s winning probability). 

Under single parameter environment, Myerson's Lemma   \cite{myerson1981optimal} gives the very concise instructions for designing IC mechanism: valuation-monotone allocation rule with each agent pays the critical winning bid. 
% \textit{Cycle monotonicity} is used to characterize monotone allocation in multi-parameter environments.
Specifically, by introducing diffusion dimension, networked auctions turn into a non-typical multi-dimension environment.
For such generalized auction scenarios with only one item, the allocation monotonicity was firstly identified in   \cite{li2020incentive} and is proved to be a \textit{sufficient} condition for any single item diffusion auction to be IC.  
%In multiple item diffusion auctions, one agent misreports her own type could totally change the allocation and payment under the whole market, thus
Therefore, if an allocation rule is non-monotone, 
%there is no theoretical guideline how to find the corresponding payment rule to form an IC mechanism. This means, with a non-monotone allocation rule, 
to guarantee truthfulness, it is necessary to find critical winning bid for each agent, which means one needs to check all range of bid and every subset of her neighbour set, which is computationally intractable when the market scale expands. 
% Moreover, based on the monotonicity, they formulated the optimal corresponding payment rule: winner pays the critical winning bid when inviting no friends, loser pays the difference between critical winning bid when inviting no friends and critical winning bid when inviting all friends.
In   \cite{li2020incentive}, based on the monotonicity, they formulated the optimal corresponding payment rule: winner pays the critical winning bid when inviting no friends, loser pays the difference between critical winning bid when inviting no friends and critical winning bid when inviting all friends. So, one promising direction for designing IC diffusion auction mechanisms for multi-item diffusion forward/reversed auctions probably lies in finding one monotone allocation rule and searching for critical winning bid to decide payment.

% In Section3.2 and Section3.3, we introduce our novel diffusion mechanisms for procurement diffusion auction with homogeneous goods and heterogeneous goods respectively. Both of the two mechanisms are proved to be individual rational, incentive compatible and weakly budget balanced. Also, we proved that the total social cost never goes beyond that under local procurement market.
% \subsection{Simple but Non-truthful Mechanism}
% When applying some multi-unit forward diffusion auction mechanisms into the procurement reverse scenarios, there exists some trouble on incentive-compatible property. Here we give one example: 

\section{Networked Procurement Auction for Homogeneous Goods}
In this section, we investigate the combinatorial reverse auction for homogeneous goods under social networks. $p$ posts her required $\tau$($\vert\boldsymbol{\tau}\vert$) homogeneous items $\boldsymbol{\tau}$. Denote her private reserve cost for each unit as $\bar{v}$. Each supplier reports a number $a_i$ of items she can provide, unit cost $c_i$ and she can diffuse the procurement information to neighbours in $r_i$. 

\subsection{Market Division}
For multi-dimensional diffusion auction, the most direct way is to apply the VCG mechansim
  \cite{vickrey1961counterspeculation,clarke1971multipart,groves1973incentives}. However, it is proved that even in single-item forward diffusion auction, VCG is not weakly budget balanced   \cite{li2017mechanism}. Finding a monotone allocation in diffusion auctions is not easy. One operational way is to divide the network into different layers, and in each layer, secure the monotone allocation.
See Alg.\ref{nmd}.

{\small{
\begin{algorithm}
\SetAlgoLined
\SetKwInOut{Input}{Input}\SetKwInOut{Output}{Output}
\SetKwFunction{ShortestPathAlgorithm}{ShortestPathAlgorithm}
\Input{Social network $G=(V,E)$, $N=|V|$}
\Output{Market division $\mathcal{G}=(\mathcal{G}_1,\cdots,\mathcal{G}_{d^\ast})$}
\For{$i\leftarrow 1\text{ to } N$}
{
$d(p,i)\leftarrow$ \ShortestPathAlgorithm{p,i}\;
}
%Let $d^\ast=\max\{d(p,i),\forall i\in N\}$\;
%Let $d^\ast=\max\{d(p,i),\forall i\in N\}$\;
Initialize sub-market $\mathcal{G}_{d(p,i)}=\emptyset,\forall d(p,i)$\;
% Compute $d(p,i),\forall i\in N$\;
Compute $d^\ast=\max\{d(p,i),\forall i\in N\}$\;
% Initialize a market division $\mathcal{G}=(\mathcal{G}_1,\cdots, \mathcal{G}_{d^\ast})$\;
\For{$i\leftarrow 1\text{ to } N$}
{
%$d(p,i)\leftarrow$ \ShortestPathAlgorithm{p,i}\;
$\mathcal{G}_{d(p,i)}\leftarrow \mathcal{G}_{d(p,i)} \cup \{i\}$\;
}
\caption{Networked Market Division}
\label{nmd}
\end{algorithm}}}

Alg.\ref{nmd} first computes the shortest distance from each supplier to the requester (any shortest path algorithm works), then all suppliers are categorised into different market divisions according to its distance to the requester (line 4-7). We have the following basic observation about market divisions.

\begin{lemma}\label{lem1}
For any sub-market $\mathcal{G}_{i}$, $\forall j\in \mathcal{G}_i$, misreporting $r_j^\prime \subseteq r_j$ will never change the sub-market $\mathcal{G}_i$.
\end{lemma}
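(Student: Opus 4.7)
The plan is to reduce the statement to a simple fact about shortest paths: the distances $d(p,k)$ for all agents $k$ with $d(p,k)\leq i$ are insensitive to which outgoing edges agent $j$ keeps, provided $d(p,j)=i$. Once this is established, neither direction of set equality for $\mathcal{G}_i$ can fail.

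First I would model the ``misreport'' $r_j'\subseteq r_j$ as the removal of a subset of the outgoing edges of $j$ from $G$, producing a subgraph $G'$. A preliminary monotonicity observation is that distances can only weakly increase when edges are deleted, so $d_{G'}(p,k)\geq d_G(p,k)$ for every $k$. This handles one side automatically: no agent $k$ with $d_G(p,k)>i$ can fall into $\mathcal{G}_i$ under the misreport, and $j$'s own distance is unaffected because only edges incoming to $j$ can appear on a shortest path from $p$ to $j$.

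Next I would prove the key claim: for every agent $k$ with $d_G(p,k)\leq i$, no shortest $p$--$k$ path in $G$ uses an outgoing edge of $j$. Suppose for contradiction that $j$ lies strictly before $k$ on some shortest $p$--$k$ path $P$ of length $\ell=d_G(p,k)\leq i$. Then the prefix of $P$ from $p$ to $j$ has length at most $\ell-1<i$, forcing $d_G(p,j)<i$, contradicting $j\in\mathcal{G}_i$. Therefore $j$ is not an interior vertex of any such $P$, and when $j=k$ the path ends at $j$ and uses only incoming edges of $j$. In either case, deleting outgoing edges of $j$ preserves $P$, so $d_{G'}(p,k)\leq \ell=d_G(p,k)$. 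Combined with the monotonicity in the previous step this gives $d_{G'}(p,k)=d_G(p,k)$ for all such $k$.

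From here the lemma follows by a short two-way set argument: an agent $k\in\mathcal{G}_i$ in $G$ satisfies $d_G(p,k)=i$ and hence $d_{G'}(p,k)=i$, so $k\in\mathcal{G}_i$ in $G'$; conversely, if $k\in\mathcal{G}_i$ in $G'$ then $d_{G'}(p,k)=i$, so by monotonicity $d_G(p,k)\leq i$, and the invariance above forces $d_G(p,k)=i$, i.e., $k\in\mathcal{G}_i$ in $G$. I expect the only mildly delicate point to be the case analysis distinguishing $j=k$ from $j$ being an interior vertex on the shortest path; everything else is bookkeeping on distances. Since $r_j$ only parameterises outgoing edges of $j$, no change is required to edges incoming to $j$ or to the rest of the graph, so the argument is self-contained and does not rely on any further structural assumption on $G$.
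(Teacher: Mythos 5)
Your proof is correct and rests on the same core observation as the paper's: if $j$'s outgoing edges mattered for the distance of some agent $k$ with $d(p,k)\leq i$, then $j$ would lie strictly before $k$ on a shortest path, forcing $d(p,j)<i$ and contradicting $j\in\mathcal{G}_i$. Your write-up is somewhat more complete than the paper's (which only checks that members of $\mathcal{G}_i$ cannot alter one another's distances), since you also use the monotonicity of distances under edge deletion to rule out agents outside $\mathcal{G}_i$ entering it; but the underlying argument is the same.
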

\begin{proof}
If there are two suppliers $s_1$ and $s_2$ in $\mathcal{G}_i$, obviously $d(p,s_1)=d(p,s_2)$. Assume that $s_1$ can change $d(p,s_2)$ by mireporting $r_{s_1}^\prime$, then $s_1$ must be on the shortest path from $p$ to $s_2$, i.e. $d(p,s_2)\geq  d(p,s_1)+1$ which contradicts $d(p,s_1)=d(p,s_2)$. Moreover, nobody could misreport to change her own shortest distance to $p$. Thus, for any supplier $s$ in $\mathcal{G}_{i}$, misreporting $r_s$ does not change the market she is in.
\end{proof}

{\small{
\begin{algorithm}[h]
\SetAlgoLined
\SetKwInOut{Input}{Input}\SetKwInOut{Output}{Output}
\SetKwFunction{NetworkedMarketDivision}{NetworkedMarketDivision}
\Input{$G=(V,E),\boldsymbol{t}_p=(\boldsymbol{\tau},\bar{\boldsymbol{v}}),\boldsymbol{t}=(t_1,\cdots,t_N)$}
\Output{$\boldsymbol{\pi},\boldsymbol{x}$}
$(\mathcal{G}_1,\cdots,\mathcal{G}_{d^\ast})\leftarrow$ \NetworkedMarketDivision{$G$}\;
\textbf{Initialize} $i\leftarrow 1$\;
\While{$i\leq d^\ast$ and $\tau\geq 0$}
{
$\mathcal{G}_i^{\ast}\leftarrow \{j,\forall j\in \mathcal{G}_i,c_j\leq \bar{v}\}$\;
\eIf{$\tau\geq \sum_{j\in\mathcal{G}_i^\ast}a_j$}
{
$\tau\leftarrow \tau - \sum_{j\in \mathcal{G}_j^\ast} a_j$\;
\For{$s\in \mathcal{G}_i^\ast$}{$\pi_s\leftarrow a_s,x_s\leftarrow \pi_s \cdot \bar{v}$}
}
{
\tcp*[h]{Add virtual node $\phi$}\;
$\phi:t_{\phi}\leftarrow(\tau,\bar{v},\emptyset)$,
$\mathcal{G}_i^\ast\leftarrow \mathcal{G}_i^\ast\cup \{\phi\}$\;
$\boldsymbol{\pi}^\ast(\mathcal{G}_i^\ast)\leftarrow\arg\min_{\boldsymbol{\pi}^\prime}\boldsymbol{\pi}(\mathcal{G}_i^\ast)\boldsymbol{c}_{\mathcal{G}_i^\ast}$\;
\For{$j\in \mathcal{G}_i^\ast$}
{
$\pi_j\leftarrow \boldsymbol{\pi}^\ast_j(\mathcal{G}_i^\ast),\tau \leftarrow \tau - \pi_j$\;
$\boldsymbol{\pi}^\ast(\mathcal{G}_i^{-j})\leftarrow\arg\min_{\boldsymbol{\pi}^\prime}\boldsymbol{\pi}(\mathcal{G}_i^{-j})\boldsymbol{c}_{\mathcal{G}_i^{-j}}$\;
$x_j\leftarrow \boldsymbol{\pi}^\ast(\mathcal{G}_i^{-j})\boldsymbol{c}_{\mathcal{G}_i^{-j}}-(\boldsymbol{\pi}^\ast(\mathcal{G}_i^\ast)\boldsymbol{c}_{\mathcal{G}_i^{\ast}}-\pi_jc_j)$\;
}
% \textbf{Sorting} $\mathcal{G}_i^\ast$: $c_1\leq c_2\leq \cdots \leq c_{|\mathcal{G}_i^\ast|}\leq c_{\phi}$\;
% \While{$\tau\geq 0$}
% {
% $\pi_j\leftarrow \min\{a_j,\tau\}$\;
% $\tau\leftarrow \tau -a_j$\;
% }
% \For{$j\in\mathcal{G}_i^\ast$}
% {
% $x_j=SC^\ast(\mathcal{G}_i^{-j})-\left(SC^\ast(\mathcal{G}_i^\ast)-\pi_j c_j\right)$
% } % for right

} % else right 
} % while right
\caption{Networked Reverse Auction with Homogeneous Goods (RAN-HM)}
\label{pdam-hm}
\end{algorithm}}}

\subsection{Mechanism Characterization}
Next, we introduce a novel networked reverse auction mechanism for homogeneous tasks(RAN-HM). It (Alg.\ref{pdam-hm}) goes through two phases. Firstly, the networked market is divided by Alg.\ref{nmd} into sub-markets. Then $p$ conducts her reverse auction  
% After the division, we give the allocation and payment rule under the scenario in Alg.\ref{pdam-hm}. By this algorithm, the requester $p$ procures goods 
from the sub-markets one by one, from near to far. In each $\mathcal{G}_i$, $p$ selects all suppliers with unit cost no greater than $\bar{v}$ as $\mathcal{G}_i^\ast$ (line 4), if the total of supply of $\mathcal{G}_i^\ast$ does not fulfill the current procurement requirement, then $p$ will select all suppliers in $\mathcal{G}_i^\ast$ and pay the reserve cost for each unit good (line 5-9). Otherwise, if the total supply in the current sub-market overfill the current requirement, then $p$ selects those suppliers with the lowest unit cost by allocation in Definition \ref{definition:optimal_allocation}. Each selected supplier's payment equals the social cost increase inflicted on other suppliers by her presence (line 11-17). 

\begin{theorem}
RAN-HM is individual rational. 
\end{theorem}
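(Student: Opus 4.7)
The plan is to verify non-negative utility agent-by-agent, distinguishing losers from the two categories of winners produced by the two branches of Alg.\ref{pdam-hm}.

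First, I would note the trivial case: any supplier $i$ who is not selected has $\pi_i=0$ and, by inspection of the algorithm, receives no payment, so $x_i - \pi_i c_i = 0 \geq 0$. All nontrivial work therefore concerns winners, and by Lemma \ref{lem1} each winner is determined entirely within a single sub-market $\mathcal{G}_i$, so I can argue locally in the sub-market containing the winner.

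Next, I would handle winners produced by the ``supply-shortage'' branch (lines 5--9 of Alg.\ref{pdam-hm}), where the entire set $\mathcal{G}_i^\ast$ is hired. By construction of $\mathcal{G}_i^\ast$ in line 4, every such supplier $s$ satisfies $c_s \leq \bar{v}$; since her payment is $\pi_s \cdot \bar{v}$ and her cost for supplying $\pi_s$ units is $\pi_s \cdot c_s$, her utility is $\pi_s(\bar{v}-c_s)\geq 0$. This step is essentially a one-line calculation.

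The core of the proof is the ``overfill'' branch (lines 11--17), which uses the VCG-style payment
\[
x_j \;=\; \boldsymbol{\pi}^\ast(\mathcal{G}_i^{-j})\boldsymbol{c}_{\mathcal{G}_i^{-j}} \;-\; \bigl(\boldsymbol{\pi}^\ast(\mathcal{G}_i^\ast)\boldsymbol{c}_{\mathcal{G}_i^{\ast}}-\pi_j c_j\bigr).
\]
Subtracting the winner's cost $\pi_j c_j$ gives
\[
u_j \;=\; \boldsymbol{\pi}^\ast(\mathcal{G}_i^{-j})\boldsymbol{c}_{\mathcal{G}_i^{-j}} \;-\; \boldsymbol{\pi}^\ast(\mathcal{G}_i^\ast)\boldsymbol{c}_{\mathcal{G}_i^{\ast}},
\]
which is the marginal reduction in social cost attributable to $j$'s presence. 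I would conclude by the monotonicity of the $\min$ over feasible allocations: any feasible allocation on $\mathcal{G}_i^{-j}$ is still feasible on the larger set $\mathcal{G}_i^\ast$ (with the virtual node $\phi$ present in both, so the demand remains satisfiable in either case), hence the minimum over the larger set is no greater than the minimum over the subset, giving $u_j \geq 0$.

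The only genuine subtlety, and the step I would be most careful about, is this last feasibility comparison: I must verify that $\boldsymbol{\pi}^\ast(\mathcal{G}_i^{-j})$ is well-defined, i.e.\ that removing $j$ from $\mathcal{G}_i^\ast$ still leaves enough combined supply (counting the virtual node $\phi$ with its unlimited willingness to absorb the residual demand at unit cost $\bar{v}$) to meet the current requirement $\tau$. Because $\phi$ is appended to $\mathcal{G}_i^\ast$ precisely in this branch and can cover any shortfall, feasibility of $\boldsymbol{\pi}^\ast(\mathcal{G}_i^{-j})$ is guaranteed, closing the argument.
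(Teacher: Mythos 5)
Your proof is correct and follows essentially the same route as the paper: losers get zero, shortage-branch winners earn $\pi_s(\bar v - c_s)\ge 0$ because line 4 filters for $c_s\le\bar v$, and overfill-branch winners have utility equal to the VCG-style difference $\boldsymbol{\pi}^\ast(\mathcal{G}_i^{-j})\boldsymbol{c}_{\mathcal{G}_i^{-j}}-\boldsymbol{\pi}^\ast(\mathcal{G}_i^\ast)\boldsymbol{c}_{\mathcal{G}_i^{\ast}}\ge 0$ by optimality of the minimum over the larger supplier set. Your explicit check that the virtual node $\phi$ keeps $\boldsymbol{\pi}^\ast(\mathcal{G}_i^{-j})$ feasible is a small point of care the paper leaves implicit, but it does not change the argument.
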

\begin{proof}
Assuming it is in $\mathcal{G}_{d^\prime}$ that suppliers in $\mathcal{G}_{d^\prime}^\ast$ oversupplies the left $\tau$. For any supplier selected before $\mathcal{G}_{d^\prime}$, her utility is $u_i=a_i(\bar{v}-c_i)\geq 0$, since mechanism only chooses those suppliers whose unit cost is lower than $\bar{v}$. For any $j$ in $\mathcal{G}_{d'}^\ast$, if she is selected, her utility is $u_j=x_j - \pi_j c_j= \boldsymbol{\pi}^\ast(\mathcal{G}_i^{-j})\boldsymbol{c}_{\mathcal{G}_i^{-j}} - \boldsymbol{\pi}^\ast(\mathcal{G}_i^\ast)\boldsymbol{c}_{\mathcal{G}_i^{\ast}} \geq 0$. If $\boldsymbol{\pi}^\ast(\mathcal{G}_i^\ast)\boldsymbol{c}_{\mathcal{G}_i^{\ast}} < \boldsymbol{\pi}^\ast(\mathcal{G}_i^{-j})\boldsymbol{c}_{\mathcal{G}_i^{-j}}$, 
% then it contradicts that mechanism always chooses the lowest cost supplier. Thus, for all suppliers who are selected will never suffer deficit under RAN-HM.
then it contradicts that $\boldsymbol{\pi}^\ast$ minimizes social cost. So for all suppliers selected in RAN-HM, truthfully reporting type will never obtain negative utility.
\end{proof}

\begin{theorem}
% Procurement Diffusion Auction Mechanism with Homogeneous goods (RAN-HM) is incentive-compatible.
RAN-HM is incentive compatible.
\end{theorem}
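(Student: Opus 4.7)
My plan is to prove IC by decomposing an arbitrary deviation $(c_i', r_i')$ of supplier $i$ into its diffusion component and its cost component and handling each separately. First I would show $i$'s diffusion report $r_i' \subseteq r_i$ has no effect on her own allocation or payment, so one may assume $r_i' = r_i$ without loss of generality. Then I would argue that, within $i$'s own sub-market, the allocation-payment rule is operationally equivalent to a reverse VCG mechanism with a virtual reserve node, from which truthful cost reporting follows by the standard VCG argument.

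For the diffusion step, let $d = d(p,i)$. I would first observe that for every $d' \leq d$ the sub-market $\mathcal{G}_{d'}$ is invariant under any $r_i' \subseteq r_i$: for $d' < d$ no shortest path from $p$ to an agent at distance $d'$ can pass through $i$ (since $i$ itself sits at distance $d$), and for $d' = d$ this is exactly Lemma \ref{lem1}. Because Alg.\ \ref{pdam-hm} processes sub-markets in increasing order of distance, and each iteration's behaviour depends only on the composition and cost reports of the current sub-market together with the residual demand inherited from earlier iterations, the sub-markets $\mathcal{G}_1, \ldots, \mathcal{G}_{d-1}$ are processed identically under any such deviation, so the residual $\tau$ entering the $\mathcal{G}_d$-iteration is unchanged. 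The iteration that fixes $\pi_i$ and $x_i$ then consults only the (unchanged) types of members of $\mathcal{G}_d$, making $u_i$ independent of $r_i'$.

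For the cost step, fix any $r_i'$. I would show the within-$\mathcal{G}_d$ rule coincides with reverse VCG over $\mathcal{G}_d^* \cup \{\phi\}$, where $\phi = (\tau, \bar{v}, \emptyset)$ is the virtual supplier. The over-demanded branch (lines 10--17) instantiates this VCG verbatim via Clarke-pivot payments. In the under-demanded branch (lines 5--9) I would verify by direct computation that the posted price $\bar{v}$ per unit paid to each $j \in \mathcal{G}_d^*$ matches the Clarke-pivot payment when $\phi$ is appended: removing $j$ forces $\phi$ to cover exactly $a_j$ more units at cost $\bar{v}$, whereas including $j$ contributes $a_j c_j$, so the externality is precisely $a_j \bar{v}$. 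Excluding suppliers with reported cost above $\bar{v}$ from $\mathcal{G}_d^*$ is benign because such a supplier would lose to $\phi$ under the full reverse VCG anyway. The standard VCG truthfulness theorem then yields that truthful $c_i$ maximizes $u_i$, uniformly across both branches.

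The main obstacle I anticipate is the threshold $\bar{v}$ and the explicit case split in Alg.\ \ref{pdam-hm}: a supplier with true cost hovering near $\bar{v}$ can flip her sub-market between the two branches by misreporting, which on its face looks like it might defeat a simple VCG-style argument. The key to closing this gap cleanly is to reframe the two branches as a single reverse VCG with $\phi$ always present, so that threshold-crossing deviations collapse into ordinary VCG misreports whose suboptimality is already covered by the truthfulness theorem; I expect most of the proof effort to go into establishing this case-(a) equivalence rigorously.
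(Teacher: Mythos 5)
Your proposal is correct and follows essentially the same route as the paper's proof: invariance of the sub-market division under diffusion misreports (Lemma~\ref{lem1}, which you strengthen to cover every sub-market at distance at most $d(p,i)$ and hence the residual demand entering $i$'s own level) disposes of the diffusion dimension, and a VCG/critical-cost argument within the supplier's own sub-market disposes of the cost dimension. Your explicit unification of the posted-price branch and the Clarke-pivot branch as one reverse VCG over $\mathcal{G}_d \cup \{\phi\}$ is a cleaner packaging of the paper's terser claim that the critical unit winning cost in the earlier sub-markets is $\bar{v}$, and it correctly handles the branch-flipping deviations that the paper's proof glosses over.
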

\begin{proof}
% Under our networked market division in Alg.\ref{nmd}, Lemma.\ref{lem1} shows suppliers could not misreport her $r_i^\prime \subseteq r_i$ to change the market she belongs to, misreporting $r_i^\prime$ only leads to two different situations: 1) some suppliers will disappear from the whole market; 2) some suppliers' shortest distance to $p$ will increase, changing those suppliers' corresponding sub-market from $\mathcal{G}_{d'}$ to $\mathcal{G}_{d''}$ where $d'\leq d''$. Both two situations will have no influence on $\boldsymbol{\pi}_i$ and $\boldsymbol{x}_i$. So for $i$'s critical winning cost: we always have $c^\ast(\emptyset) = c^\ast(r_i)$. Next we only need to focus on the $c_i$ dimension, firstly, we will prove that $\boldsymbol{\pi}$ on dimension $c_i$ is monotone and $\boldsymbol{x}$ gives the critical winning cost. 
Lemma \ref{lem1} has shown that allocation rule of RAN-HM is diffusion monotone. For valuation dimension, every time mechanism greedily chooses lowest cost supplier. So we can see RAN-HM satisfies allocation-monotone.
Assuming $\mathcal{G}_{d^\prime}^\ast$ oversupplies $\tau$ and for suppliers selected before $\mathcal{G}_{d^\prime}$, their critical unit winning cost is $\bar{v}$; In $\mathcal{G}_{d^\prime}$, since no supplier in $\mathcal{G}_{d^\prime}$ could change left $\tau$ when $p$ enter $\mathcal{G}_{d^\prime}$ and RAN-HM's allocation and payment in $\mathcal{G}_{d^\prime}$ apply that of VCG mechanism, Implementation of RAN-HM in $\mathcal{G}_{d^\prime}$ satisfies incentive compatible. 
% In general, for all suppliers, truthfully reporting type always maximizes their own utility. 
In general, for all suppliers, truthfully reporting type always maximizes their own utility. 
\end{proof}

\begin{theorem}
RAN-HM is weakly budget balanced.
\end{theorem}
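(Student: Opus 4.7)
The goal is to show $\sum_{i \in N} x_i \leq \tau \bar v$. The plan is to establish the stronger per-supplier bound $x_i \leq \pi_i \bar v$ for every selected supplier $i$, and then sum over the winners, using $\sum_i \pi_i \leq \tau$. The proof will split according to which sub-market a supplier is selected in, mirroring the two branches of Algorithm~\ref{pdam-hm}.

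First I would dispose of the ``easy'' sub-markets, namely those $\mathcal{G}_d$ that are exhausted before the demand is met. By construction (line 7 of Alg.~\ref{pdam-hm}), each winner $s$ in such a sub-market satisfies $c_s \leq \bar v$ and receives payment $x_s = \pi_s \bar v$ exactly, so the per-unit payment is $\bar v$. This contributes $\bar v$ times the units procured in these layers.

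The substantive step is the final (``tight'') sub-market $\mathcal{G}_{d^\prime}$ in which the virtual node $\phi$ with type $(\tau, \bar v, \emptyset)$ has been inserted and VCG is run over $\mathcal{G}_{d^\prime}^\ast \cup \{\phi\}$. For any winner $j$ in this sub-market I would argue as follows: the candidate allocation obtained from $\boldsymbol{\pi}^\ast(\mathcal{G}_i^\ast)$ by removing $j$ and instead assigning her $\pi_j$ units to $\phi$ is feasible for the reduced market $\mathcal{G}_i^{-j}$, so by optimality of $\boldsymbol{\pi}^\ast(\mathcal{G}_i^{-j})$,
\[
\boldsymbol{\pi}^\ast(\mathcal{G}_i^{-j})\boldsymbol{c}_{\mathcal{G}_i^{-j}} \;\leq\; \boldsymbol{\pi}^\ast(\mathcal{G}_i^\ast)\boldsymbol{c}_{\mathcal{G}_i^\ast} - \pi_j c_j + \pi_j \bar v.
\]
Plugging this into the payment formula on line~14 of Alg.~\ref{pdam-hm} immediately yields $x_j \leq \pi_j \bar v$, which is exactly the bound I need. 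Note that any units in this layer assigned to $\phi$ itself incur no payment to an external supplier and only reflect the requester fulfilling the residual demand at her own reserve cost.

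Summing the per-supplier inequality over all winners gives
\[
\sum_{i \in N} x_i \;\leq\; \bar v \sum_{i \in N} \pi_i \;\leq\; \bar v \, \tau \;=\; \sum_{\tau_i \in \boldsymbol{\tau}} \bar v(\tau_i),
\]
so $u_p \geq 0$, establishing weak budget balance. The main subtlety I expect is the feasibility/substitution argument in the tight sub-market: one must carefully verify that $\phi$ has enough capacity to absorb $j$'s $\pi_j$ units when $j$ is removed, which follows from the fact that $\phi$ was inserted precisely with capacity equal to the outstanding demand $\tau$ at that stage, so the substitute allocation is always feasible.
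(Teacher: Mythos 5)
Your proof is correct and follows essentially the same route as the paper: bound each winner's payment by $\pi_i \bar{v}$ and sum, using $\sum_i \pi_i \leq \tau$. In fact your substitution argument in the tight sub-market (reassigning $j$'s units to the virtual node $\phi$ to bound the optimal cost of $\mathcal{G}_i^{-j}$) supplies the justification for the inequality $x_i \leq \pi_i \bar{v}$ that the paper's one-line proof merely asserts.
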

\begin{proof}
Since RAN-HM only select suppliers with $c_j\geq \bar{v}$ into account we directly get:
$u_p = \tau \bar{v} - \sum_{i\in N}\boldsymbol{x}_i \geq \tau\bar{v} - \sum_{i\in N} \boldsymbol{\pi}_i \bar{v} = 0$. 
\end{proof}

\section{Networked Procurement Auction for Heterogeneous Goods}

Consider that the requester procures for a set of heterogeneous goods $\boldsymbol{\tau}=\{\tau_1,\cdots,\tau_k\}$, and her private reserve costs for these goods are $\bar{\boldsymbol{v}}=\{\bar{v}(\tau_1),\cdots,\bar{v}(\tau_k)\}$. She broadcasts her purchase requirement to her own friends. Then the networked market gets expanded. The profile of agents' types is $\boldsymbol{t}=(t_1,\cdots,t_N)$. Note that $t_i=(a_i,c_i,r_i)$, where now $a_i$ is the \textit{set} of heterogeneous goods that $i$ can provide and $c_i$ is the total cost for goods she can provide.
When applying VCG mechanism, one problem is that the deficit problem still exists; while another problem is we prove that implementing VCG mechanism here is computationally intractable. 

\begin{theorem}\label{the4}
Finding social cost minimization allocation for networked combinatorial reverse auction is $\mathcal{NP}$-hard.
\end{theorem}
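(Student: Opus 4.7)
The plan is to prove hardness by a polynomial-time reduction from the classical weighted (minimum) Set Cover problem, which is well-known to be $\mathcal{NP}$-hard. The intuition is that in the heterogeneous case each supplier offers a \textit{bundle} $a_i \subseteq \boldsymbol{\tau}$ at a single (single-minded) cost $c_i$, so choosing a feasible allocation of minimum total cost is literally the problem of choosing a minimum-weight sub-collection of the bundles $\{a_i\}$ whose union equals $\boldsymbol{\tau}$. The network aspect turns out to be a red herring for hardness: I only need the hardness of the allocation subproblem, so I can pick the simplest possible network topology and let all cost lie in the combinatorial structure.

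First I would fix the decision version of the problem: given $(G, \boldsymbol{\tau}, \bar{\boldsymbol{v}}, \boldsymbol{t})$ and a threshold $C$, is there a feasible allocation $\boldsymbol{\pi}$ with $\sum_i \pi_i \cdot c_i \leq C$? Then, given an instance of weighted Set Cover with universe $U=\{u_1,\ldots,u_k\}$, sub-collection $\mathcal{S}=\{S_1,\ldots,S_n\}$ with weights $w_1,\ldots,w_n$, and bound $W$, I would construct an instance of our problem as follows: let $\boldsymbol{\tau}=U$; create one supplier per set $S_j$ with bundle $a_j=S_j$ and cost $c_j=w_j$; make the underlying graph a star in which $p$ has a direct directed edge to every supplier (so $r_j=\emptyset$ suffices and $d(p,j)=1$ for all $j$); and set the reserve cost of every task $\tau\in\boldsymbol{\tau}$ to $\bar v(\tau) := 1+\sum_j w_j$, so that the virtual supplier $\phi$ is never cost-competitive against any nontrivial cover. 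Finally, set $C:=W$. This construction is clearly polynomial in the size of the Set Cover instance.

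Next I would argue correctness of the reduction in both directions. Any feasible allocation $\boldsymbol{\pi}$ in the constructed instance that does not use $\phi$ corresponds to a sub-collection $\mathcal{S}'=\{S_j:\pi_j=1\}$ whose union is $\boldsymbol{\tau}=U$, so $\mathcal{S}'$ is a set cover of total weight $\sum_{S_j\in\mathcal{S}'}w_j=\sum_j \pi_j c_j$; conversely, a set cover $\mathcal{S}'$ yields a feasible allocation of the same social cost by choosing exactly the corresponding suppliers. Because the reserve cost $\bar v(\tau)$ strictly exceeds the cost of covering any single task via any single supplier, any optimal allocation never uses $\phi$ whenever a cover by real suppliers exists, so the minimum social cost equals the minimum Set Cover weight. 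Hence a yes-instance of the reverse-auction decision problem with threshold $C=W$ exactly corresponds to a yes-instance of weighted Set Cover.

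The main obstacle I anticipate is ensuring that the reduction remains valid under the specific constraints of the \emph{networked} model, in particular that (i) single-mindedness with publicly-known $a_i$ does not trivialize the instance, and (ii) the allocation must cover \emph{all} of $\boldsymbol{\tau}$ (otherwise the virtual supplier $\phi$ absorbs the slack and the social cost landscape degenerates). Point (i) is handled because even with $a_i$ public, picking a minimum-weight covering sub-family is exactly Set Cover. Point (ii) is handled by the choice of $\bar v(\tau)$ large enough to make any use of $\phi$ strictly suboptimal whenever a cover exists. Once these two subtleties are dealt with, the reduction is straightforward and yields $\mathcal{NP}$-hardness of computing the optimal allocation of Definition \ref{definition:optimal_allocation}, which in particular implies that a direct VCG implementation in this setting is computationally intractable.
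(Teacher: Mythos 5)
Your proposal is correct and matches the paper's approach: both establish hardness via the equivalence between the social-cost-minimization allocation and weighted Set Cover. Your version is in fact written in the cleaner direction (explicitly reducing Set Cover \emph{to} the auction problem on a star network) and handles the virtual supplier $\phi$ more carefully by choosing $\bar v(\tau)$ larger than the total weight, whereas the paper folds the reserve costs into the set system as singleton sets and leaves that detail implicit.
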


\begin{proof}
Firstly, we introduce the \textit{weighted set cover problem}   \cite{conf/coco/Karp72}: one finite universe set $U=\{e_1,\cdots,e_n\}$ of $n$ elements, $S=\{s_1,\cdots,s_m\}$ of $m$ subsets of $U$ with weights $W=\{w_1,\cdots,w_m\}$. Decision version is whether there exists a set cover with weights no greater than a constant $\alpha$. Assuming we have a polynomial time algorithm for each instance of the social cost minimization problem and weighted set cover problem is $\mathcal{NP}$\textit{-complete}. Given one networked combinatorial reverse auction: $G=(V,E),\boldsymbol{t}_p=(\boldsymbol{\tau},\bar{\boldsymbol{v}}),\boldsymbol{t}=(t_1,\cdots,t_N)$ and the minimum value of social cost $\beta$, then define $U=\{\tau_1,\cdots,\tau_k\}$, $S=\{a_1,\cdots,a_N,\{\tau_1\},\cdots,\{\tau_k\}\}$ with weights $W=\{c_1,\cdots,c_N,\bar{v}(\tau_1),\cdots,\bar{v}(\tau_k)\}$. It is not hard to check that $\exists$ a minimum social cost $\beta$ $\Leftrightarrow$ $\exists$ a set cover with total weights equals $\beta$, and directly comparing $\alpha$ and $\beta$ gives an answer to the decision version weighted set cover problem. All reduction conduct in polynomial time, therefore we would have polynomial time algorithm to solve the weighted set cover problem, which contradicts. So finding social cost minimization allocation for networked combinatorial reverse auction is $\mathcal{NP}$-hard.
\end{proof}

Due to the $\mathcal{NP}$\textit{-hardness}, to provide a tractable allocation, we can relax the objective of social cost minimization. We devise the first networked combinatorial reverse auction mechanism and introduce it in Alg.\ref{pdam-ht}. This algorithm also utilizes the networked market division in Alg.\ref{nmd}. 
In each independent sub-market, we repeatedly and greedily choose the supplier with the current highest non-negative marginal utility for requester $p$ (see Definition \ref{mv}). When there does not exist one non-negative marginal utility supplier in the current sub-market, $p$ moves to the next one and stops until all goods get procured or all markets are visited. 

To illustrate RAN-HT (Alg.\ref{pdam-ht}). Firstly, we give the definition of \textit{marginal valuation} and \textit{marginal utility}. 
\begin{definition}\label{mv}
%In the combinatorial reverse auction, 
Let $\mathcal{L}$ denote an arbitrary set of heterogeneous goods, and denote its corresponding reserve costs as $\bar{\boldsymbol{v}}$. Given a supplier $i$ with type $t_i=(a_i,c_i,r_i)$, $i$'s marginal valuation regarding $\mathcal{L}$ is defined as $mv_i(\mathcal{L})=\sum_{\tau\in \mathcal{L}\cap a_i}\bar{v}(\tau)$. Moreover, $i$'s marginal utility $\tilde{v}_i(\mathcal{L})$ equals her marginal valuation minus her own product cost, i.e. $\tilde{v}_i(\mathcal{L}) = mv_i(\mathcal{L}) - c_i$.
\end{definition}

In Alg.\ref{pdam-ht}, RAN-HT firstly calls Alg.\ref{nmd} to divide the whole market $\mathcal{G}$ into sub-markets (line 1). The first loop (line 3) means, $p$ procures goods sequentially from sub-market $\mathcal{G}_1$ to sub-market $\mathcal{G}_{d^\ast}$. For each loop $i$, requester $p$ enters sub-market $\mathcal{G}_i$, RAN-HT immediately finds all uncovered goods $\mathcal{L}_i$ (line 4) and chooses suppliers with current maximum non-negative marginal utility, and update $\mathcal{L}_i$ (line 5-8). Denote $\mathcal{G}_i^\ast$ as those selected suppliers from $\mathcal{G}_i$. To decide $x_j$, RAN-HT finds winners $\mathcal{W}$ within the market without $j$, which is denoted as $\mathcal{G}_i^{-j}$ (line 12-16) and set $x_j$ as the maximal reported cost that can beat all winners in $\mathcal{W}$ (line 18-23). If there exists no extra winner under $\mathcal{G}_i^{-j}$, then $p$ would pay $j$ her marginal valuation regarding $\mathcal{L}_i$ (line 17). Algorithm terminates when all items are covered or all sub-markets are visited.

{\small{
\begin{algorithm}[h]
\SetAlgoLined
\SetKwInOut{Input}{Input}\SetKwInOut{Output}{Output}
\SetKwFunction{NetworkedMarketDivision}{NetworkedMarketDivision}
\Input{$G=(V,E),\boldsymbol{t}_p=(\boldsymbol{\tau},\bar{\boldsymbol{v}}),\boldsymbol{t}=(t_1,\cdots,t_N)$}
\Output{$\boldsymbol{\pi},\boldsymbol{x}$}
$(\mathcal{G}_1,\cdots,\mathcal{G}_{d^\ast})\leftarrow$ \NetworkedMarketDivision{$G$}\;
Initialize $i\leftarrow 1,\mathcal{S}\leftarrow \emptyset$\;
\While{$i\leq d^\ast$}
{
$\mathcal{L}_i\leftarrow \boldsymbol{\tau}-\cup_{q\in\mathcal{S}}a_q,\mathcal{G}_i^\ast\leftarrow \emptyset$,$\mathcal{L}_i^\prime\leftarrow \mathcal{L}_i$\;
\tcp*[h]{Allocation Phase for $\mathcal{G}_i$}\;
\While{$\mathcal{G}_i^\ast\neq \mathcal{G}_i$ and $\tilde{v}_j(\mathcal{L}_i^\prime)\geq 0$}
{
$j\leftarrow \arg\max_{\mathcal{G}_i\setminus\mathcal{G}_i^\ast}\tilde{v}_j$\;
$\mathcal{G}_i^\ast\leftarrow \mathcal{G}_i^\ast \cup \{j\}$,$\mathcal{L}_i^\prime \leftarrow \mathcal{L}_i^\prime \setminus \cup_{s\in \mathcal{G}_i^\ast}a_s$\;
}
$\mathcal{S}\leftarrow \mathcal{S} \cup \mathcal{G}_i^\ast$\tcp*[h]{Update Selected Set}\;
\tcp*[h]{Payment Phase for $\mathcal{G}_i^\ast$}\;
\For{$j\in \mathcal{G}_i^\ast$}
{
$\pi_j\leftarrow 1,x_j\leftarrow 0$\;
$\mathcal{W}\leftarrow\emptyset,\mathcal{G}_i^{-j}\leftarrow \mathcal{G}_i\setminus \{j\}, \mathcal{L}_i^j\leftarrow \mathcal{L}_i$\;
\While{$\mathcal{W}\neq \mathcal{G}_i^{-j}$ and $\tilde{v}_{\ell}(\mathcal{L}_i^j)\geq 0$}
{
$\ell \leftarrow \arg\max_{\mathcal{G}_i^{-j}\setminus \mathcal{W}} \tilde{v}_\ell(\mathcal{L}_i^j)$\;
$x_j\leftarrow \max(x_j,mv_j(\mathcal{L}_i^j)-\tilde{v}_{\ell}(\mathcal{L}_i^j)$\;
$\mathcal{W}\leftarrow \mathcal{W}\cup \{\ell\},\mathcal{L}_i^j\leftarrow \mathcal{L}_i^j\setminus \cup_{s\in \mathcal{W}}a_s$\;
}
\If{$\mathcal{W}=\emptyset$}{$x_j\leftarrow mv_j(\mathcal{L}_i^j)$}
\If{$\mathcal{W}=\mathcal{G}_i^{-j}$}{$x_j\leftarrow\max(x_j,mv_j(\mathcal{L}_i^j))$}

}
}
\caption{Networked Reverse Auction with Heterogeneous Goods (RAN-HT)}
\label{pdam-ht}
\end{algorithm}}}

\begin{example}\label{example-2}
In Fig.\ref{fig2}, requester $p$ is with $\boldsymbol{t}_p=(\boldsymbol{\tau},\bar{\boldsymbol{v}})$, where $\boldsymbol{\tau}=\{a,b,c,d,e,f,g,h,j,k\}$, $\bar{\boldsymbol{v}}=\{a:3,b:5,c:4,d:8,e:10,f:7,g:12,h:9,j:6,k:5\}$. Firstly $p$ releases $\boldsymbol{\tau}$ to her own friends: $s_1,s_2,s_3$, then $s_1,s_2,s_3$ also invite their own friends to join this market. Then friends diffuse to friends, which finally forms a large networked market $G$. 
\end{example}
To run the procedure of RAN-HT in Example \ref{example-2}, firstly, we divide the whole market $G$ into three sub-markets: $\mathcal{G}_1=\{s_1,s_2,s_3\},\mathcal{G}_2=\{s_4,s_5,s_6,s_7\}$ and $\mathcal{G}_3=\{s_8\}$. Then the requester makes procurement along the three sub-markets. In $\mathcal{G}_1$,
%, setting $\mathcal{L}_1=\boldsymbol{\tau}$ $\mathcal{G}_1^\ast=\emptyset$ to keep the selected suppliers in $\mathcal{G}_1$, then 
we can calculate three suppliers' marginal utility for $p$: $\tilde{v}_{s_1}(\mathcal{L}_1)=26,\tilde{v}_{s_2}(\mathcal{L}_1)=17,\tilde{v}_{s_3}(\mathcal{L}_1)=19$, thus we add $s_1$ into $\mathcal{G}_1^\ast$, then those unselected suppliers marginal utility change to $\tilde{v}_{s_2}(\mathcal{L}_1\setminus a_{s_1})=7,\tilde{v}_{s_3}(\mathcal{L}_1\setminus a_{s_1})=4$. So next we add $s_2$ into $\mathcal{G}_1^\ast$ and then we can find that $s_3$'s marginal utility to $p$ becomes negative, so we add all $s_1,s_2$ in $\mathcal{G}_1^\ast$ into $\mathcal{S}$. for $s_1$, $\{s_2,s_3\}$ will be winners in $\mathcal{G}_1^{-s_1}$, $x_{s_1}=\max\{38-19,23-3,23\}=23$; for $s_2$, $\{s_1,s_3\}$ will be selected in $\mathcal{G}_1^{-s_2}$, $x_{s_2}=\max\{26-26,16-4,12\}=12$. Then the requester leaves $\mathcal{G}_1$ and enter $\mathcal{G}_2$. Note at this time, $\mathcal{L}_2=\{a,g\}$, for suppliers in sub-market $\mathcal{G}_2^\ast$, $\tilde{v}_{s_4}(\mathcal{L}_2)=-8,\tilde{v}_{s_5}(\mathcal{L}_2)=-6,\tilde{v}_{s_6}(\mathcal{L}_2)=5,\tilde{v}_{s_7}(\mathcal{L}_2)=-2.$ So in $\mathcal{G}_2$, $\mathcal{G}_2^\ast=\{s_6\}$, then add $s_6$ into $\mathcal{S}$. For $s_6$, nobody will be winner in $\mathcal{G}_2^{-s_6}$, so $\mathcal{W}=\emptyset$ and $x_{s_6}=mv_{s_6}(\mathcal{L}_2)=12$. Finally $p$ enter $\mathcal{G}_3$ and $\mathcal{L}_3=\{a\}$, for the unique supplier $s_8$ in $\mathcal{G}_3$, $\tilde{v}_{s_8}(\mathcal{L}_3)=-8$. So the algorithm terminates and the final selected suppliers are $\mathcal{S}=\{s_1,s_2,s_6\}$, i.e. $\boldsymbol{\pi}_{s_1}=1,\boldsymbol{\pi}_{s_2}=1,\boldsymbol{\pi}_{s_6}=1$ and their payments are $\boldsymbol{x}=\{x_{s_1}=23,x_{s_2}=12,x_{s_6}=12\}$. From the perspective of $p$, all of procurement tasks except $a$ are covered by the selected three suppliers $\{s_1,s_2,s_6\}$. Procuring under such a networked supply market, both social cost ($69\to 31$) and the requester's cost ($69\to 50$) get decreased. 

\begin{figure}[h]
	    \centering
    	    \includegraphics[scale=1.0]{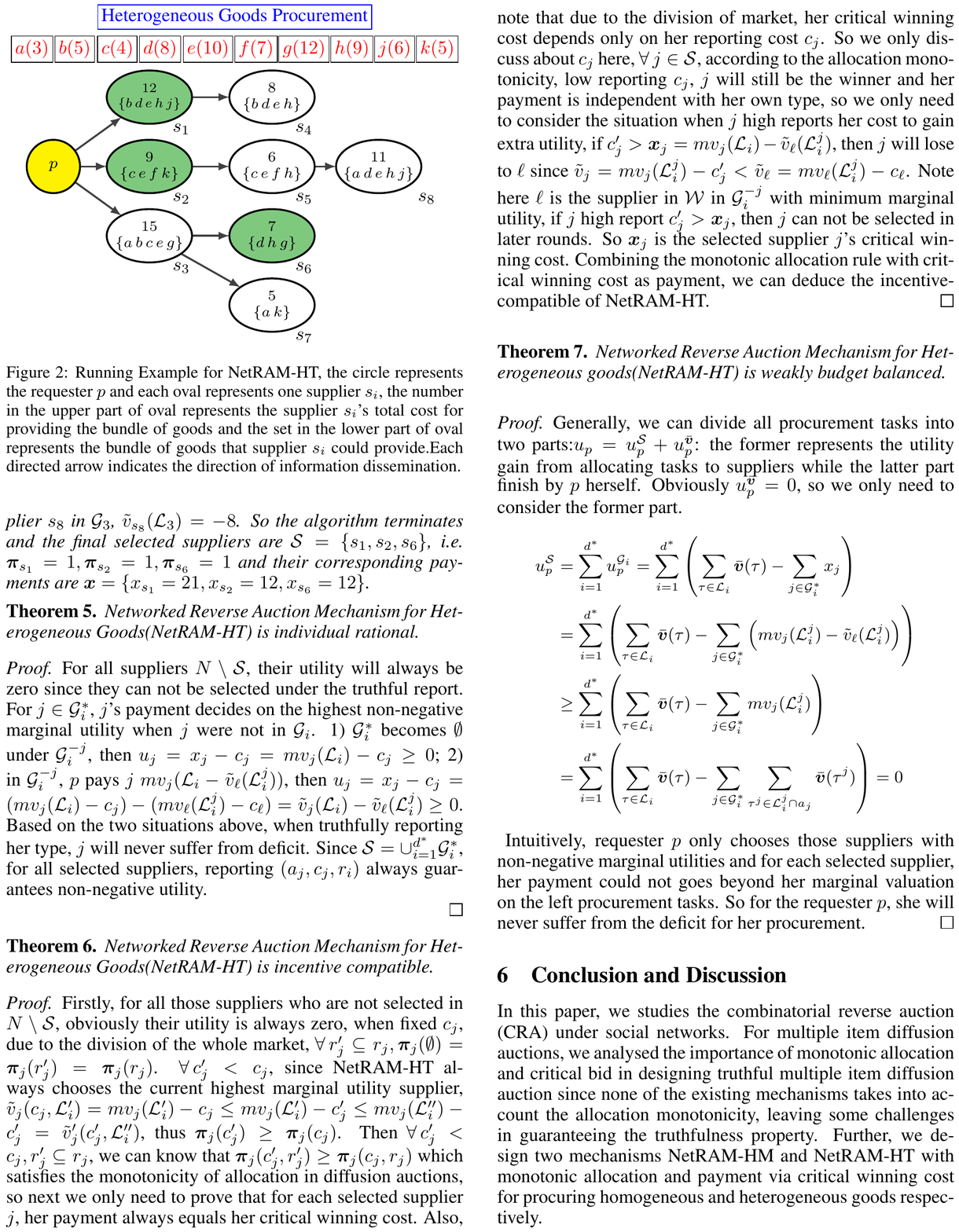}
    \caption{Running Example for RAN-HT, yellow circle is the requester $p$ and each oval represents one supplier $s_i$, the number in the upper part of each oval is the supplier $s_i$'s total cost for providing the bundle of goods. The set in the lower part of each oval is the bundle of goods that supplier $s_i$ could provide. Each directed arrow indicates the direction of information diffusion. Letters and numbers in the black boxes indicates the reverse cost of each good.}
     \label{fig2}
\end{figure}

\begin{theorem}
Networked Reverse Auction Mechanism for Heterogeneous Goods (RAN-HT) is individual rational.
\end{theorem}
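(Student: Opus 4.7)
The plan is to show individual rationality by treating losers and winners separately. A supplier $j$ who is never added to any $\mathcal{G}_i^\ast$ never enters the payment for-loop of RAN-HT, so $\pi_j = 0$, $x_j = 0$, and $u_j = 0$. The substantive claim is that every winner $j \in \mathcal{G}_i^\ast$ ends up with $x_j \geq c_j$, and hence $u_j = x_j - c_j \geq 0$.

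Fix such a winner $j$, selected at iteration $k$ of the allocation loop on sub-market $\mathcal{G}_i$, and write $s_1,\ldots,s_{k-1}$ for the winners added before her and $\mathcal{L}_i^{(k-1)} = \mathcal{L}_i \setminus \bigcup_{m<k} a_{s_m}$ for the uncovered set just before her selection. The argmax rule and the allocation-loop guard together give $\tilde v_j(\mathcal{L}_i^{(k-1)}) \geq 0$, i.e., $c_j \leq mv_j(\mathcal{L}_i^{(k-1)})$; this single inequality will do all the real work.

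The core step is to verify that the payment loop for $j$ on $\mathcal{G}_i^{-j} = \mathcal{G}_i \setminus \{j\}$ tracks the allocation loop during its first $k-1$ rounds. Removing $j$ from the candidate pool does not alter any other supplier's marginal utility on any subset of goods, so whenever the allocation-phase argmax at step $m \leq k-1$ was some $s_m \neq j$, the payment-phase argmax over $\mathcal{G}_i^{-j}\setminus\{s_1,\ldots,s_{m-1}\}$ is the same $s_m$. Thus after $k-1$ payment-phase rounds one has $\mathcal{W} = \{s_1,\ldots,s_{k-1}\}$ and $\mathcal{L}_i^j = \mathcal{L}_i^{(k-1)}$, matching the allocation state exactly. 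I then finish with three sub-cases according to how round $k$ of the payment loop resolves:
\begin{itemize}
\item[(i)] Some $\ell$ with $\tilde v_\ell(\mathcal{L}_i^{(k-1)}) \geq 0$ is selected; since $j$ was the allocation-phase argmax at step $k$, $\tilde v_\ell(\mathcal{L}_i^{(k-1)}) \leq \tilde v_j(\mathcal{L}_i^{(k-1)})$, and the payment update in line 16 yields
\[
x_j \;\geq\; mv_j(\mathcal{L}_i^{(k-1)}) - \tilde v_\ell(\mathcal{L}_i^{(k-1)}) \;\geq\; mv_j(\mathcal{L}_i^{(k-1)}) - \tilde v_j(\mathcal{L}_i^{(k-1)}) \;=\; c_j.
\]
\item[(ii)] $\mathcal{W}$ is already all of $\mathcal{G}_i^{-j}$, so the second special-case line fires and $x_j \geq mv_j(\mathcal{L}_i^j) = mv_j(\mathcal{L}_i^{(k-1)}) \geq c_j$.
\item[(iii)] $k = 1$ and no element of $\mathcal{G}_i^{-j}$ has non-negative marginal utility for $\mathcal{L}_i$, so $\mathcal{W}$ stays empty and the first special-case line gives $x_j = mv_j(\mathcal{L}_i) \geq mv_j(\mathcal{L}_i^{(k-1)}) \geq c_j$, using monotonicity of $mv_j$ in the goods argument.
\end{itemize}

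The hard part will be the alignment step. Making the replay of $s_1,\ldots,s_{k-1}$ rigorous requires arguing that excluding $j$ can never promote an alternative supplier to the top at a round where $j$ was not already the leader, and it also requires a tie-breaking convention shared between the allocation and payment phases. Once that is in place, the three sub-cases collapse to the single Myerson-style inequality in (i), with the special-case lines cleanly handling the boundary situations where round $k$ of the payment loop either does not execute or has no non-$j$ competitor.
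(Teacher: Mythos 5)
Your proof is correct and takes essentially the same route as the paper's: losers get zero utility, and each winner's payment is bounded below by her cost via a case analysis on how $x_j$ is finalized, hinging on the observation that removing $j$ from $\mathcal{G}_i$ leaves the earlier winners' selection (and hence the uncovered set) unchanged. The paper compresses your ``alignment'' step into the single assertion that suppliers selected before $j$ will still be preferred in $\mathcal{G}_i^{-j}$, so your explicit replay argument is just a more rigorous rendering of the same idea.
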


\begin{proof}
% For all suppliers $N\setminus\mathcal{S}$, their utility will always be zero since they can not be selected under the truthful report. 
For not selected suppliers, truthful report leads to $u_i=0$.
For $j\in \mathcal{G}_i^\ast$, 
% $j$'s payment decides on the highest non-negative marginal utility when $j$ were not in $\mathcal{G}_i$. 
1. $\mathcal{W}=\emptyset$ under $\mathcal{G}_i^{-j}$, then $u_j=x_j-c_j=mv_j(\mathcal{L}_i)-c_j\geq 0$;
2. $x_j=mv_j(\mathcal{L}_i^{j})-\tilde{v}_\ell(\mathcal{L}_i^j)$ represents $\ell$ would replace $j$'s wining position in $\mathcal{G}_i^{-j}$. Suppliers selected before $j$ will still be preferred to be selected, $j$ still could beat $\ell$ under $\mathcal{L}_i^j$: $mv_j(\mathcal{L}_i^j)-c_j\geq mv_\ell(\mathcal{L}_i^j)-c_\ell$, i.e. $c_j\leq mv_j(\mathcal{L}_i^j)-\tilde{v}_\ell(\mathcal{L}_i^j)=x_j$.
3. $x_j=mv_j(\mathcal{L}_i^j)$, then we know $j$ must be the last selected supplier in $\mathcal{G}_i$. $u_j=mv_j(\mathcal{L}_i^j)-c_j=mv_{j}(\mathcal{L}_i)-c_j\geq 0$.
% (2). in $\mathcal{G}_i^{-j}$, $p$ pays $j$ $mv_j(\mathcal{L}_i-\tilde{v}_{\ell}(\mathcal{L}_i^j))$, then $u_j=x_j-c_j=(mv_j(\mathcal{L}_i)-c_j)-(mv_{\ell}(\mathcal{L}_i^j)-c_{\ell})=\tilde{v}_{j}(\mathcal{L}_i)-\tilde{v}_{\ell}(\mathcal{L}_i^j)\geq 0$. Based on the two situations above, when truthfully reporting her type, $j$ will never suffer from deficit. 
Since $\mathcal{S}=\cup_{i=1}^{d^\ast}\mathcal{G}_{i}^\ast$, for all selected suppliers, reporting $(a_j,c_j,r_i)$ always guarantees non-negative utility. 
\end{proof}

\begin{theorem}
Networked Reverse Auction Mechanism for Heterogeneous Goods (RAN-HT) is incentive compatible.
\end{theorem}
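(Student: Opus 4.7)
The plan is to reduce the joint incentive-compatibility claim to two independent one-dimensional arguments, one for the diffusion report $r_j$ and one for the cost report $c_j$, by exploiting the layered sub-market structure produced by Alg.\ref{nmd}. Throughout, fix an arbitrary supplier $j$ and write $d = d(p,j)$.

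First, I would close the diffusion dimension. Lemma \ref{lem1} already shows that misreporting $r_j^\prime \subseteq r_j$ cannot move $j$ out of $\mathcal{G}_d$ nor change the membership of $\mathcal{G}_d$. An analogous shortest-path argument rules out any change in the earlier sub-markets $\mathcal{G}_{i}$ for $i < d$: if $j$ were on a shortest $p$-to-$v$ path with $v\in\mathcal{G}_i$, then $d(p,v) \ge d+1 > i$, a contradiction. Thus the competitor set within $\mathcal{G}_d$ and the uncovered set $\mathcal{L}_d$ seen when RAN-HT enters $j$'s layer are both invariant under $j$'s diffusion strategy. Since lines 4--23 of Alg.\ref{pdam-ht} executed at $i=d$ determine $\pi_j$ and $x_j$ using only this data, and since any node reachable only through $j$ lives in a strictly later sub-market $\mathcal{G}_{d+k}$ with $k\ge1$, $j$'s diffusion choice never alters her own utility. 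Hence truthful diffusion is weakly dominant, and $r_j$ may be fixed from here on.

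Next, I would treat the cost dimension as a single-parameter IC problem restricted to $\mathcal{G}_d$ with $\mathcal{L}_d$ fixed. Two facts need verification: (i) $\pi_j$ is weakly monotone in $c_j$, and (ii) the payment $x_j$ computed in lines 12--23 equals $j$'s critical winning cost against the competitors in $\mathcal{G}_d^{-j}$. For (i), the allocation loop ranks candidates by $\tilde v_\bullet(\cdot)$; lowering $c_j$ strictly raises $\tilde v_j$ on every residual set while leaving every other supplier's marginal utility on the same residual unchanged, so $j$'s position in the selection order only moves earlier, which implies that if $j$ is admitted under $c_j$ she is admitted under every $c_j^\prime \le c_j$. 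For (ii), the payment loop replays the same greedy on $\mathcal{G}_d^{-j}$ and, at each step, records $mv_j(\mathcal{L}_d^j)-\tilde v_\ell(\mathcal{L}_d^j)$, which is exactly the largest cost at which $j$ would still outrank the current competitor $\ell$ on that residual; the boundary cases cap $x_j$ by $mv_j(\mathcal{L}_d^j)$ when either no competitor ever surfaces or when a final residual value must still be matched. Taking the running maximum delivers the supremum of reports at which $j$ remains a winner, i.e.\ the critical winning cost of the earlier definition. Combining (i) with this payment identity yields the standard single-parameter conclusion that truthful $c_j$ maximizes $x_j - c_j$.

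The main obstacle is (ii): because the greedy operates on \emph{marginal} utilities that shift as goods get covered, the residual confronted by competitor $\ell$ depends on which suppliers were selected earlier, so the thresholds recorded in lines 14--17 are not pairwise comparisons against a fixed set. The technical key is an execution-alignment lemma showing that, at $c_j=x_j$, the prefix of the allocation loop up to $j$'s admission coincides, residual by residual, with the payment loop on $\mathcal{G}_d^{-j}$ up to the step that realises the binding threshold, so that the competitor $\ell$ attaining the maximum faces the same uncovered set in both runs. Once this alignment is in place, monotonicity and the critical-cost identity combine cleanly, and the diffusion half already handled via Lemma \ref{lem1} then completes the IC proof.
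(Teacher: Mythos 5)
Your proposal follows essentially the same route as the paper's proof: the market division (Lemma~\ref{lem1}, extended to earlier layers) neutralizes the diffusion dimension, greedy selection by marginal utility gives value-monotonicity, and the payment loop on $\mathcal{G}_i^{-j}$ is argued to output the critical winning cost, so IC follows from monotone allocation plus critical-cost payment. The one place you go beyond the paper is in explicitly flagging the execution-alignment issue --- that the thresholds $mv_j(\mathcal{L}_i^j)-\tilde{v}_\ell(\mathcal{L}_i^j)$ are recorded against shifting residuals rather than a fixed competitor set --- which the paper's proof silently assumes; your observation that the allocation loop's prefix before $j$'s admission coincides step-by-step with the payment loop on $\mathcal{G}_i^{-j}$ is exactly the right way to close this, though you state it as a lemma rather than proving it, leaving your argument incomplete at precisely the point where the paper's own proof is informal.
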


\begin{proof}
% Firstly, for not selected suppliers, their utility are zero, 
When fixed $c_j$, due to the division of the whole market, $\forall\,r_j^\prime \subseteq r_j, \boldsymbol{\pi}_j(\emptyset) = \boldsymbol{\pi}_j(r_j^\prime) = \boldsymbol{\pi}_j(r_j)$. $\forall\, c_j^\prime < c_j$, since RAN-HT always chooses the current highest marginal utility supplier, $\tilde{v}_j(c_j,\mathcal{L}_i^\prime)=mv_j(\mathcal{L}_i^\prime)-c_j \leq mv_j(\mathcal{L}_i^\prime)-c_j^\prime \leq mv_j(\mathcal{L}_i'')-c_j^\prime = \tilde{v}_j^\prime(c_j^\prime,\mathcal{L}_i'')$, thus  $\boldsymbol{\pi}_j(c_j^\prime)\geq \boldsymbol{\pi}_j(c_j)$.
Then $\forall\, c_j^\prime<c_j,r_j^\prime\subseteq r_j$, we can know that $\boldsymbol{\pi}_j(c_j^\prime,r_j^\prime)\geq \boldsymbol{\pi}_j(c_j,r_j)$ which satisfies the monotonicity of allocation in diffusion auctions, so next we only need to prove that for each selected supplier $j$, her payment always equals her critical winning cost. Due to the market division, her critical winning cost depends only on her reporting cost $c_j$. So we only discuss about $c_j$ here, $\forall\, j\in \mathcal{S}$, according to the allocation monotonicity, low reporting $c_j$, $j$ will still be the winner and her payment is independent with her own type. 
Next consider $c_j^\prime >x_j$: (1). $x_j=mv_j(\mathcal{L}_i)<c_j^\prime$ means $\tilde{v}_j(\mathcal{L}_i)$, $j$ will lose; (2). $x_j=mv_j(\mathcal{L}_i^j)-\tilde{v}_\ell(\mathcal{L}_i^j)$, high report $c_j^\prime > x_j$ will make $j$ lose to $\ell$ under $\mathcal{L}_i^j$, on the other hand, $x_j = \max_{\ell^\prime}\{mv_j(\mathcal{L}_i^j)-\tilde{v}_{\ell^\prime}(\mathcal{L}_i^j)\}$, $j$ also will lose to all winners after $\ell$, so reporting $c_j^\prime$, $j$ will never be selected; (3) if $x_j=mv_j(\mathcal{L}_i^j)$, then $j$ is the last selected supplier, obviously, reporting $c_j^\prime > x_j$ would lose. Combining the monotone allocation rule with critical winning cost as payment, we can deduce the IC of RAN-HT.
% When $j$ high reports her cost to gain extra utility, if $c_j^\prime > \boldsymbol{x}_j=mv_j(\mathcal{L}_i)-\tilde{v}_{\ell}(\mathcal{L}_i^j)$, then $j$ will lose to $\ell$ since $\tilde{v}_j=mv_j(\mathcal{L}_i^j)-c_j^\prime < \tilde{v}_\ell=mv_\ell(\mathcal{L}_i^j)-c_\ell$. Note here $\ell$ is the supplier in $\mathcal{W}$ in $\mathcal{G}_i^{-j}$ with minimum marginal utility, if $j$ high report $c_j^\prime>\boldsymbol{x}_j$, then $j$ can not be selected in later rounds. So $\boldsymbol{x}_j$ is the selected supplier $j$'s critical winning cost. Combining the monotone allocation rule with critical winning cost as payment, we can deduce the IC of RAN-HT.
\end{proof}

\begin{theorem}
Networked Reverse Auction Mechanism for Heterogeneous goods (RAN-HT) is weakly budget balanced.
\end{theorem}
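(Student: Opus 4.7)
The plan is to bound each selected supplier's payment by the reserve cost of the items she newly supplies in the final allocation, and then invoke disjointness across suppliers to conclude weak budget balance.

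First I would fix a sub-market $\mathcal{G}_i$ and denote the greedy selections of Alg.\ref{pdam-ht} as $\mathcal{G}_i^\ast=\{j_1,\dots,j_{k_i}\}$ in picking order. Let $L_t^i = \mathcal{L}_i\setminus\bigcup_{s<t} a_{j_s}$ be the uncovered set at the moment $j_t$ is chosen and $A_t^i = a_{j_t}\cap L_t^i$ the items that $j_t$ newly supplies. The inner selection loop (lines 6--9) only adds a supplier whose marginal valuation on the current uncovered set $\mathcal{L}_i^\prime$ is non-negative and then removes $a_{j_t}$ from $\mathcal{L}_i^\prime$, so the sets $\{A_t^i\}_t$ are pairwise disjoint inside $\mathcal{G}_i$; and because $\mathcal{L}_{i+1}$ excludes everything already supplied, the whole family $\{A_t^i\}_{i,t}$ is a disjoint sub-collection of $\boldsymbol{\tau}$.

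The core technical step is the pointwise inequality $x_{j_t}\le mv_{j_t}(L_t^i)=\sum_{\tau\in A_t^i}\bar{v}(\tau)$. The key observation is that the modified greedy run on $\mathcal{G}_i^{-j_t}$ used inside the payment loop (lines 15--21) reproduces the original first $t-1$ picks $j_1,\dots,j_{t-1}$, since removing $j_t$ does not disturb an argmax that was not already $j_t$. Hence $\mathcal{L}_i^{j_t}$ attains the value $L_t^i$ at iteration $t$ of the modified run. I would then check every branch that can write $x_{j_t}$: at iteration $t$ the update is $mv_{j_t}(L_t^i)-\tilde{v}_\ell(L_t^i)\le mv_{j_t}(L_t^i)$ because $\tilde{v}_\ell\ge 0$; at iterations $s<t$ the update equals $mv_{j_t}(L_s^i)-\tilde{v}_{j_s}(L_s^i)$, which the original argmax inequality $\tilde{v}_{j_s}(L_s^i)\ge \tilde{v}_{j_t}(L_s^i)$ bounds by $c_{j_t}$, and $c_{j_t}\le mv_{j_t}(L_t^i)$ by the non-negative marginal-utility selection rule; at iterations $s>t$, $\mathcal{L}_i^{j_t}\subseteq L_t^i$ so the update is again $\le mv_{j_t}(L_t^i)$. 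The two special assignments on lines 20--21 either trigger only when $t=1$, so that $L_1^i=\mathcal{L}_i$ and $x_{j_1}=mv_{j_1}(\mathcal{L}_i)$, or take a max with $mv_{j_t}(\mathcal{L}_i\setminus\bigcup_{s\in\mathcal{G}_i^{-j_t}}a_s)\le mv_{j_t}(L_t^i)$, and thus respect the bound.

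Combining the pointwise bound with the disjointness of $\{A_t^i\}_{i,t}$ gives
\begin{equation*}
\sum_{j\in \mathcal{S}} x_j\;\le\;\sum_{i}\sum_{t=1}^{k_i} mv_{j_t}(L_t^i)\;=\;\sum_i\sum_t\sum_{\tau\in A_t^i}\bar{v}(\tau)\;\le\;\sum_{\tau\in\boldsymbol{\tau}}\bar{v}(\tau),
\end{equation*}
so that $u_p=\sum_\tau\bar{v}(\tau)-\sum_{j\in\mathcal{S}} x_j\ge 0$, which is weak budget balance. The main obstacle I anticipate is the pointwise bound: the payment loop can assign $x_{j_t}$ through several branches, and $\mathcal{L}_i^{j_t}$ is not monotone in the loop index, so the inequality has to be verified uniformly over every update.
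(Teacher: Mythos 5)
Your proposal is correct and follows essentially the same route as the paper: bound each winner's payment by her marginal valuation on the residual uncovered set at her selection, then use disjointness of the newly covered items to conclude $\sum_j x_j \le \sum_{\tau\in\boldsymbol{\tau}}\bar{v}(\tau)$. You carry out the pointwise bound $x_{j_t}\le mv_{j_t}(L_t^i)$ more carefully than the paper does (which glosses over the several branches of the payment loop and writes the final disjointness step as an equality), but the underlying argument is the same.
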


\begin{proof}
Generally, we can divide all procurement tasks into two parts: $u_p=u_p^\mathcal{S}+u_p^{\bar{\boldsymbol{v}}}$. The former is the utility gains from allocating tasks to suppliers while the latter is finished by $p$ herself. Obviously $u_p^{\bar{\boldsymbol{v}}}=0$, so we only need to consider $u_p^\mathcal{S}$.
{\small{\[
\begin{split}
    u_p^{\mathcal{S}} &= \sum_{i=1}^{d^\ast} u_p^{\mathcal{G}_i}= \sum_{i=1}^{d^\ast} \left( \sum_{\tau\in \mathcal{L}_i}\bar{\boldsymbol{v}}(\tau)-\sum_{j\in \mathcal{G}_i^\ast}x_j \right)\\
    &=\sum_{i=1}^{d^\ast}\left( \sum_{\tau\in \mathcal{L}_i}\bar{\boldsymbol{v}}(\tau) - \sum_{j\in \mathcal{G}_i^\ast}\left(mv_j(\mathcal{L}_i^j)-\tilde{v}_{\ell}(\mathcal{L}_i^j)\right)\right)\\
    &\geq \sum_{i=1}^{d^\ast}\left( \sum_{\tau\in \mathcal{L}_i}\bar{\boldsymbol{v}}(\tau) - \sum_{j\in \mathcal{G}_i^\ast} mv_j(\mathcal{L}_i^j)\right)\\
    &= \sum_{i=1}^{d^\ast}\left( \sum_{\tau\in \mathcal{L}_i}\bar{\boldsymbol{v}}(\tau) - \sum_{j\in \mathcal{G}_i^\ast} \sum_{\tau^j\in\mathcal{L}_i^j\cap a_j}\bar{\boldsymbol{v}}(\tau^j)\right)=0\\
\end{split}
\]}}

Intuitively, requester $p$ only chooses those suppliers with non-negative marginal utilities and for each selected supplier, her payment could not go beyond her marginal valuation on the remaining tasks. So $p$ never suffers from deficit.
\end{proof}

\section{Simulations}
To evaluate the perform of our mechanisms RAN-HM and RAN-HT, we have implemented the procurement markets for both homogeneous and heterogeneous tasks. For procuring homogeneous tasks, we conduct experiments for three different mechanisms: ND-VCG (VCG mechanism without diffusion, i.e. market only contains the requester and her neighbourhood suppliers.), D-VCG (VCG mechanism over the whole market), RAN-HM (the novel mechanism introduced in Alg.\ref{pdam-hm}). For procuring heterogeneous tasks, due to the computational intractability, we compare two different mechanisms, one greedy algorithm(Implementing RAN-HT in the local market) and our RAN-HT. Also, to evaluate the robustness of our mechanisms over different markets, we compared different scenarios with different scale of markets from three aspects: different social networks, different scale of tasks and different scale of suppliers. To get more reliable results, $20$ times experiments are conducted under each minimum experimental unit.

\subsection{Experimental Results of Procurement Homogeneous Tasks}
% We implementing our simulations from three aspects. First, how different mechanisms work in different social networks with different ratio of sparsity, second, we consider that how the amount of total tasks influences the payment of the requester and the social cost of the system, third, we also explore that when the amount of tasks is fixed, how the scale of the market, i.e. the number of total suppliers could affect the result. 
\subsubsection{Social Networks Domain}
For different social networks, we generate different social networks with a random graph algorithm (See supplementary for details) where we set one parameter $prob$ to decide whether to generate one edge between two random chosen vertices or not. Procedures of the algorithm are shown in Alg.\ref{grg}:

{\small{
\begin{algorithm}
\SetAlgoLined
\SetKwInOut{Input}{Input}\SetKwInOut{Output}{Output}
\SetKwFunction{Random}{Random}
\SetKwFunction{IsConnected}{IsConnected}
\Input{Suppliers Amount: $N$, Probability of generating edge: $prob$}
\Output{Social Network $G=(N+1,E)$}
Initialize \textbf{flag}$\leftarrow 0,G=(N+1,E = \emptyset),i\leftarrow 1,j \leftarrow 1$\;
\While{\textbf{flag} $\neq 1$}
{
\For{$i < N+1$}
{
\For{$i < j < N+1$}
{
 $p \leftarrow $ \Random(0,1)\;
 \If{$p < prob$}
 {
 $E \leftarrow E \cup (i,j)$
 }
}
}
\If{\IsConnected}
{
\textbf{flag} $= 1$\;
}
}
\caption{Generating Random Graph}
\label{grg}
\end{algorithm}}}

For different scale of tasks, we fix the probability $prob$ for generating networks and the number of suppliers in our market; similarly, we fix the probability and the number of the tasks to explore the influence of different supplier amounts. Detailed parameter settings are shown as follows: We range the probability value $prob$ varies from $[0.05,0.30]$. Obviously, the density of the graph increases along with the increase of the probability for generating edges. Besides, we fix three different market sizes:
% (1). $20$ suppliers with $100$ homogeneous tasks; 
% (2). $40$ suppliers with $200$ homogeneous tasks; 
% (3). $100$ suppliers with $500$ tasks. 
(1). $N=20,|\boldsymbol{\tau}|=100$; (2). $N=40, |\boldsymbol{\tau}|=200$; (3). $N=100, |\boldsymbol{\tau}|=500$. 
For each agent $i$, we make assumption that firstly, her ability $a_i$ for finishing tasks follows the uniform distribution: $a_i\sim U[1,10]$, secondly, her cost $c_i$ for finishing all tasks also follows the uniform distribution: $c_i\sim U[1,10]$.

The results of payment and social cost are shown in Figure.\ref{fig-payment-networks}:
\begin{figure}[!htbp]
    \centering
    \includegraphics[width=1\textwidth]{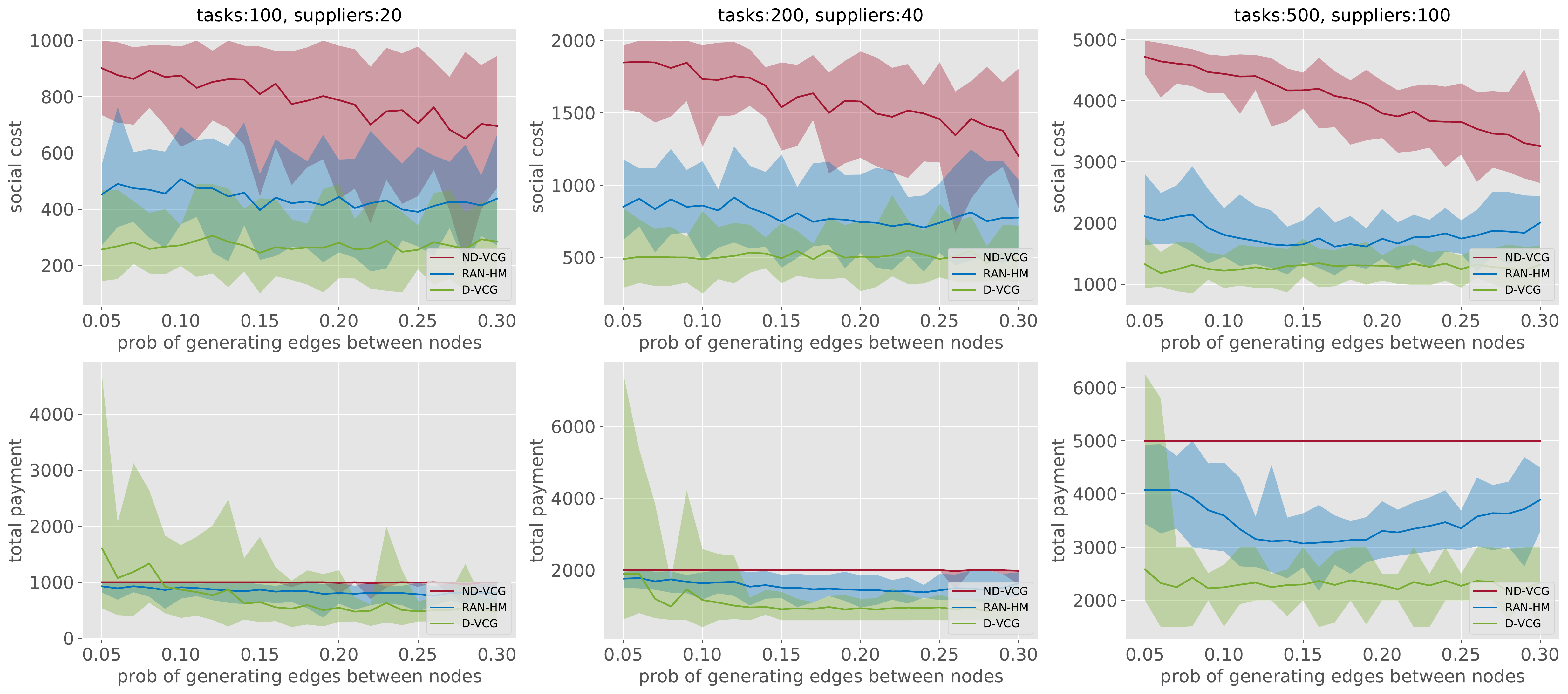}
    \caption{Social cost and payment of the requester over different social networks}
    \label{fig-payment-networks}
\end{figure}
Row one shows the relationship between social cost and the density of the social network while row two gives the requester's payment. Lines in each sub-graph represent the average value and the shaded area indicates upper and lower boundaries that the corresponding mechanism achieves. All sub-figures share the $x$-axis with different probability for connections in range $[0.05,0.30]$ with step $0.02$. 

From the perspective of social cost, a denser social network could bring more 1-hop or 2-hops suppliers around the requester, leading to lower social cost, this is most evident in ND-VCG. For RAN-HM, although it proceeds along the leveled markets one by one, it achieves quite good social cost in practice. For D-VCG, it is costly to gain the optimal social cost, it is obvious that the requester's payment via D-VCG extremely volatile according to different social networks especially when the market is not large enough (i.e. pursuing optimal social cost is extremely costly under small market). This instability weakens as the market expands and the density of social networks increases. Generally speaking, D-VCG dominants the ND-VCG and RAN-HM in the average sense, however, the terrible deficit elicited by D-VCG is unacceptable for the requester. Moreover, RAM-HM always outperforms ND-VCG and it catches great trade-offs between the allocation efficiency and the requester's payment, showing better robustness in any size of market. Thus, practically speaking, the RAN-HM is more promising for applications in reality. 
% In row one, D-VCG acquires the optimal allocation with the lowest social cost under the market. For these three different market size, as it grows in scale, it is obviously the gap between the social cost that RAM-HM can achieve and the optimal result of diffusion-VCG grows since larger scale market could brings more potential suppliers with lower cost. From the perspective of social networks, our parameter probability of generating edges between two nodes represents the edge sparsity of social networks with the same amount of nodes, it is not hard to see, for diffusion-VCG, when the connections between agents are sparse, its payment could be extremely terrible. More money are payed for achieving the optimal allocation and encouraging all agents to diffuse information. In row two, when the market is not big enough, taking diffusion-VCG mechanism could bring the requester deficit. Also, we can see that the results achieved by diffusion-VCG is not robust especially the graph is sparse. However, the market gets larger and more connections are built between agents, diffusion-VCG achieves the best social cost and payment under in average. Also, our RAM-HM makes trade-off between the allocation efficiency and the requester's payment, which shows better robust in any size of market and it always goes beyond the result under VCG mechanism in local market. 
Besides, we conduct experiments (Figure.\ref{complete-g-tree}) for the complete graph and tree-structure graph to show that how those mechanisms work in extreme cases. 
\begin{figure}[!htbp]
    \centering
    \includegraphics[width=1\linewidth]{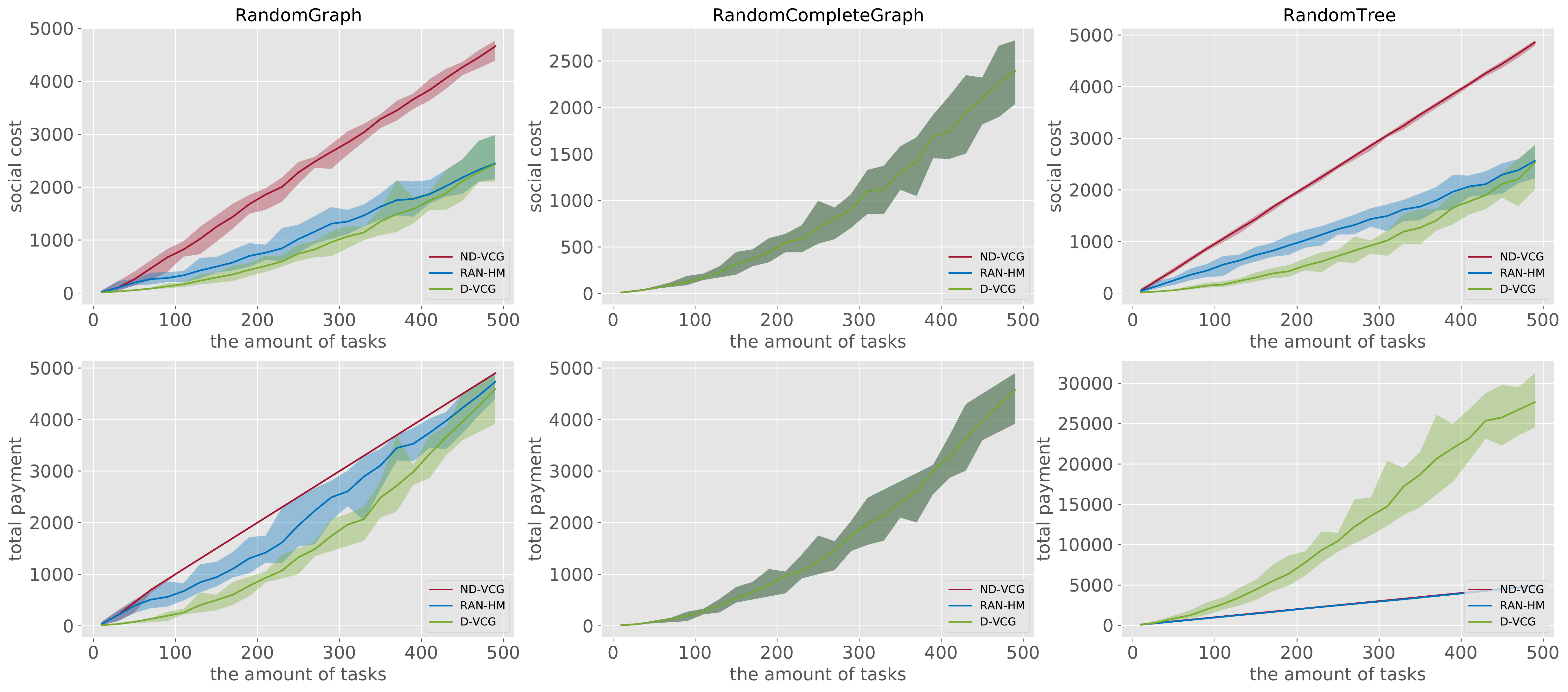}
    \caption{Social cost and total payment under random graph or complete graph and tree-structured graph}
    \label{complete-g-tree}
\end{figure}
Here we give three different types of social networks: complete graph, random graph and random tree. From the point of view of the density of the social network, the density of these three graphs are $0.2,1,0.1$ respectively. For complete graph, it is not hard to see that all suppliers are directly connected with the requester $r_p$, so the networked market degenerates to a local market, where the allocation and payment elicited by ND-VCG, D-VCG and RAN-HM take the same form (i.e. there exists no difference between these three mechanisms under any complete graphs). For random tree, the disadvantage of D-VCG will be particularly pronounced whatever the amount of tasks and suppliers are in the market.

% \textbf{Experimental setup for different amount of tasks}: 
\subsubsection{Scale of Tasks Domain}
For different task scales, we set the amount of tasks $\tau$'s range as $[10,500]$. Also, to make comparison, we firstly fix the parameter $prob$ as $0.05$ while set the amount of suppliers $N$ as $20,40,100$ respectively. 
The results of payment and social cost are shown in Figure.\ref{fig-payment-market}.
\begin{figure}[!htbp]
    \centering
    \includegraphics[width=1\textwidth]{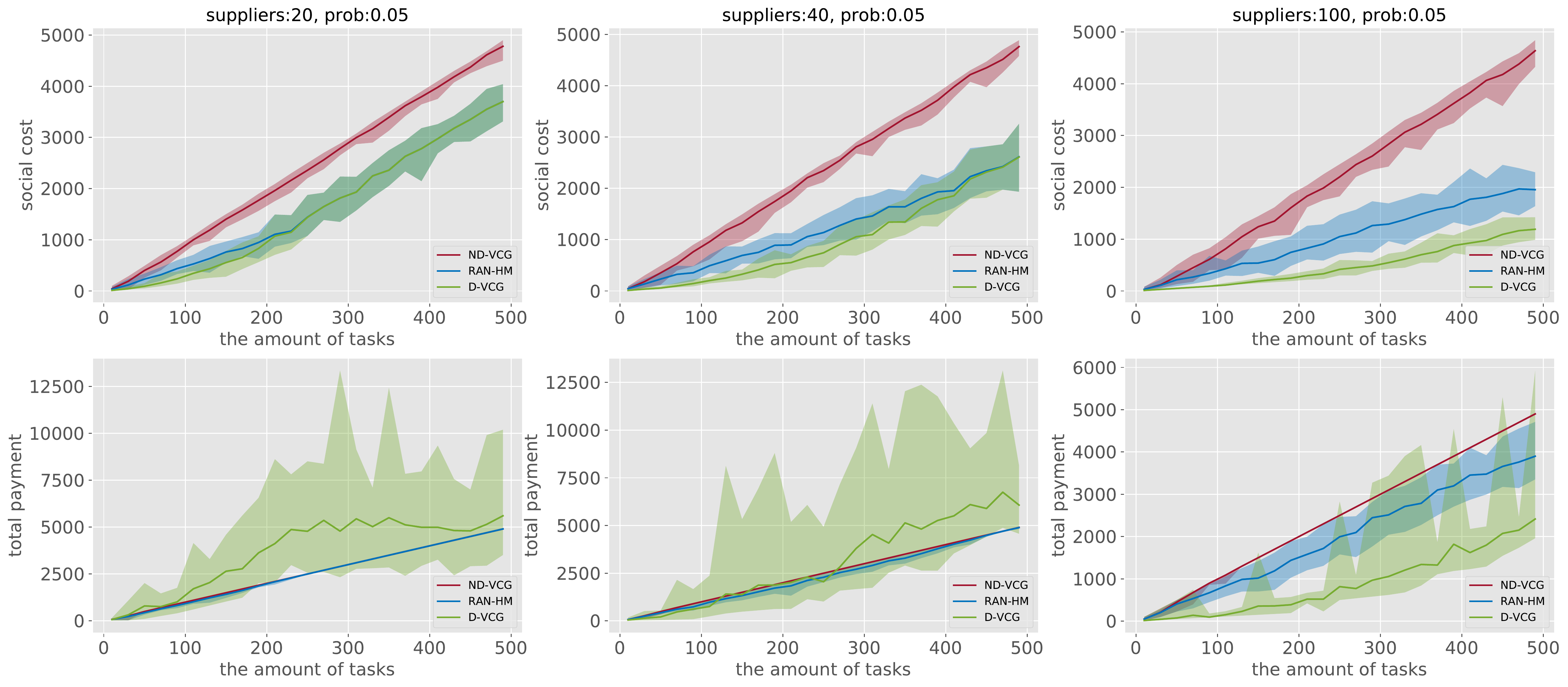}
    \caption{Social cost and payment of the requester with different amount of tasks}
    \label{fig-payment-market}
\end{figure}

Firstly, it is not hard to see that there exists an essentially linear relationship between the number of tasks and social costs. However, different mechanism has different linear ratios, for D-VCG, the linear growth rate of its social cost decreases with increasing suppliers base size. In contrast, D-VCG does not work well in the requester's payment dimension especially when the market lacks sufficient suppliers, in order to hire those suppliers with least social cost, D-VCG mechanism has to pay a considerable cost to incentive some suppliers to inviting potential suppliers with least social cost. This deficit effect diminishes as the number of suppliers in the market increases. However, as shown in the last sub-figure in Figure.\ref{fig-payment-market}, it is still risky for the deficit under some specific social networks.  For our RAN-HM, both for social cost and the total payment, RAN-HM always dominants ND-VCG, besides, it performs quite well in terms of social cost and the requester's payment in both small and large markets compared with D-VCG (i.e. near-optimal social cost, stable payment that never goes beyond budget). 

% In row one, we can see that our mechanism achieves near-optimal social cost like D-VCG mechanism when the amount of suppliers in market are not so large. Also, we can see that the growth of social cost tends to be largely linear as the number of tasks increases, but the linear rate of increase for these three mechanisms are totally different. With the market expansion, the advantage of D-VCG to achieve better social cost gradually emerges. At the same time, our mechanism RAN-HM also shows quite great results on social cost. Apart from it, considering the requester's payment, our mechanism RAN-HM gives fairly stable results, which always achieves budget balance both in small scale and large scale markets. For D-VCG, in small scale markets, to get the optimal social cost, it becomes very volatile, making the requester suffer from terrible deficit. Although in large scale market, it gives lower average payment, it is still risky for the deficit due to the specific social networks. 

% \noindent \textbf{Experimental setup for different amount of suppliers}
\subsubsection{Scale of Suppliers Domain}
To explore the effects of the amount of suppliers in the market, we fix the social network generation with the same probability with $prob = 0.05$, the total demand of goods or task over the whole market $|\boldsymbol{\tau}| = 500$, each supplier's ability for finishing tasks $a_i\sim U[1,10]$. For the amount of suppliers $N$, we take it in range $[10,500]$. The results of payment and social cost are shown in Figure.\ref{fig-payment-agent}. 
% \noindent \textbf{Experimental results for different amount of suppliers}
\begin{figure}[!htbp]
    \centering
    \includegraphics[width=1\textwidth]{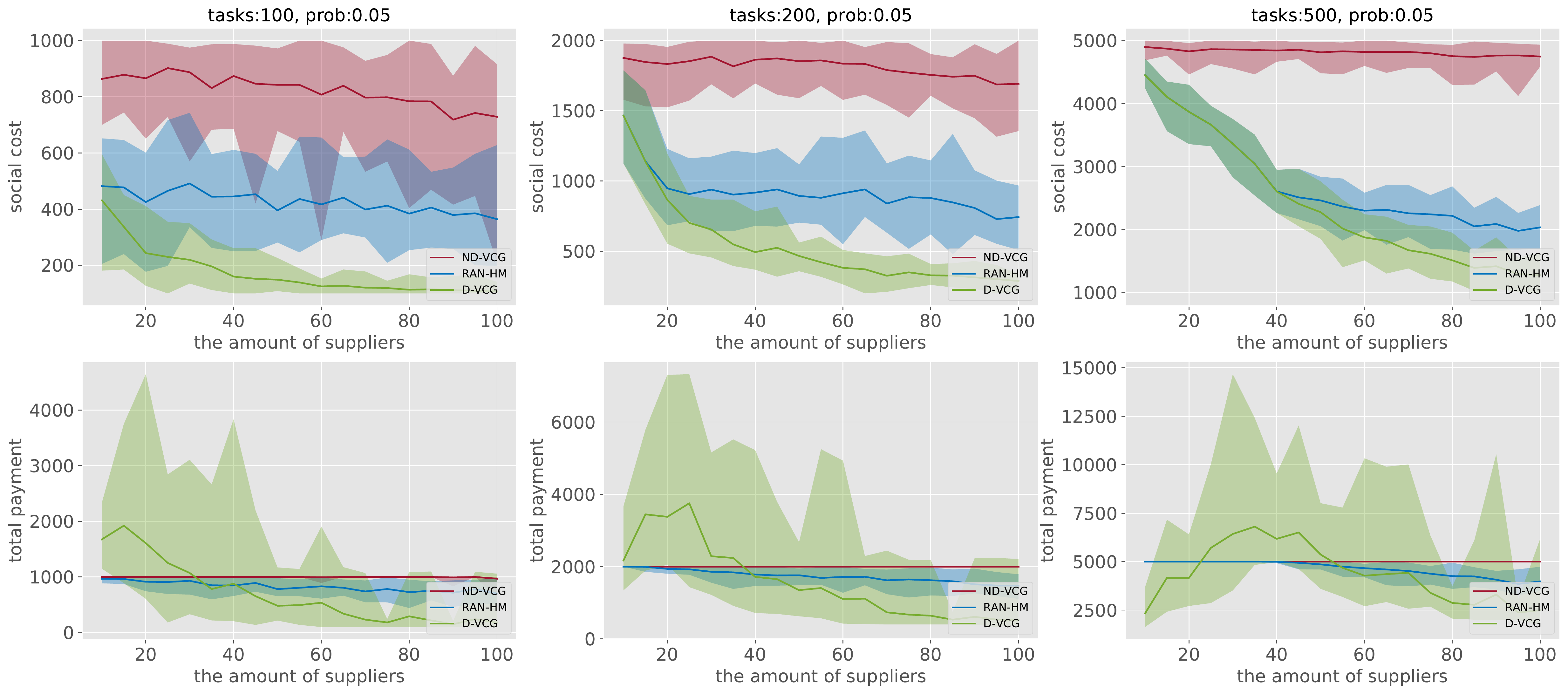}
    \caption{Social cost and payment of the requester with different amount of suppliers}
    \label{fig-payment-agent}
\end{figure}

When considering the relation between the amount of suppliers and the social cost and payment with fixed sparsity of random networks and fixed tasks. Generally speaking, with the number of suppliers increase, more potential suppliers with lower cost appear, thus both the social cost and the payment get decreased for D-VCG and the RAN-HM since both of the two mechanisms make allocation and payment over social networks. With the expansion of market size, the gap of social cost between the D-VCG and RAN-HM get larger, mainly because that RAN-HM takes the greedy form, where it always firstly satisfies suppliers who are in the former markets. From the perspective of the requester's payment, RAN-HM still shows great stability under all the potential markets: both few-tasks-many-suppliers market and many-tasks-few-suppliers market. For D-VCG, when the suppliers are not enough, it results in extreme uncertain for the payment. With the increase of supplier numbers, the instability get eased and its performance goes beyond RAN-HM in average aspect. 

% \begin{figure}[!htbp]
%     \centering
%     \includegraphics[width=1\textwidth]{}
%     \caption{Social Cost of the system over different size of suppliers}
%     \label{fig-social-cost-agent}
% \end{figure}

\subsection{Experimental Results of Procurement Heterogeneous Tasks}
In Theorem.\ref{the4}, we have mentioned that finding social cost minimization allocation in our scenario is $\mathcal{NP}-$hard, thus it is computationally intractable with VCG mechanism under this procurement with heterogeneous goods. So here we make comparison for two different mechanisms, one is our RAN-HT while the other is the RAN-HT without diffusion settings (Note it as Greedy Algorithm). Similarly, we implement experiments for different social networks, markets and amount of suppliers. From homogeneous scenario to heterogeneous scenario, the adjusted parameters are as follows: for the requester $t_p$, we randomize each heterogeneous task's cost $\bar{v}(\tau_i)\sim U[1,10]$. for each supplier $i$, we randomize her ability $|a_i|\sim U[2,10]$ and randomly choose $|a_i|$ different tasks from $\boldsymbol{\tau}$, i.e. we assume each supplier could finish $2$ to $10$ tasks and randomly sample the specific tasks. Also, for her cost $c_i$, we assume it also follows the uniform distribution $c_i\sim U[5,20]$. To compare mechanisms under social networks with different sparsity, we still take the same generation form in Section 6.1. Besides, we fix three different market sizes:
% (1). tasks:100, suppliers:20; (2). tasks:200, suppliers:40; (3). tasks:500, suppliers:100.) are explored for different social networks. 
(1). $N=20,|\boldsymbol{\tau}|=100$; (2). $N=40, |\boldsymbol{\tau}|=200$; (3). $N=100, |\boldsymbol{\tau}|=500$. 

% \noindent \textbf{Experimental setup for different social networks}
\subsubsection{Social Networks Domain}
To make difference between different social networks, we still take the random graph algorithm in the appendix and set parameter $prob$ in range $[0.05, 0.30]$ by step $0.02$. Figure.\ref{ran-1} has shown the simulation results under procurement diffusion auctions with heterogeneous tasks or tasks.
\begin{figure}[!htbp]
    \centering
    \includegraphics[width=1\textwidth]{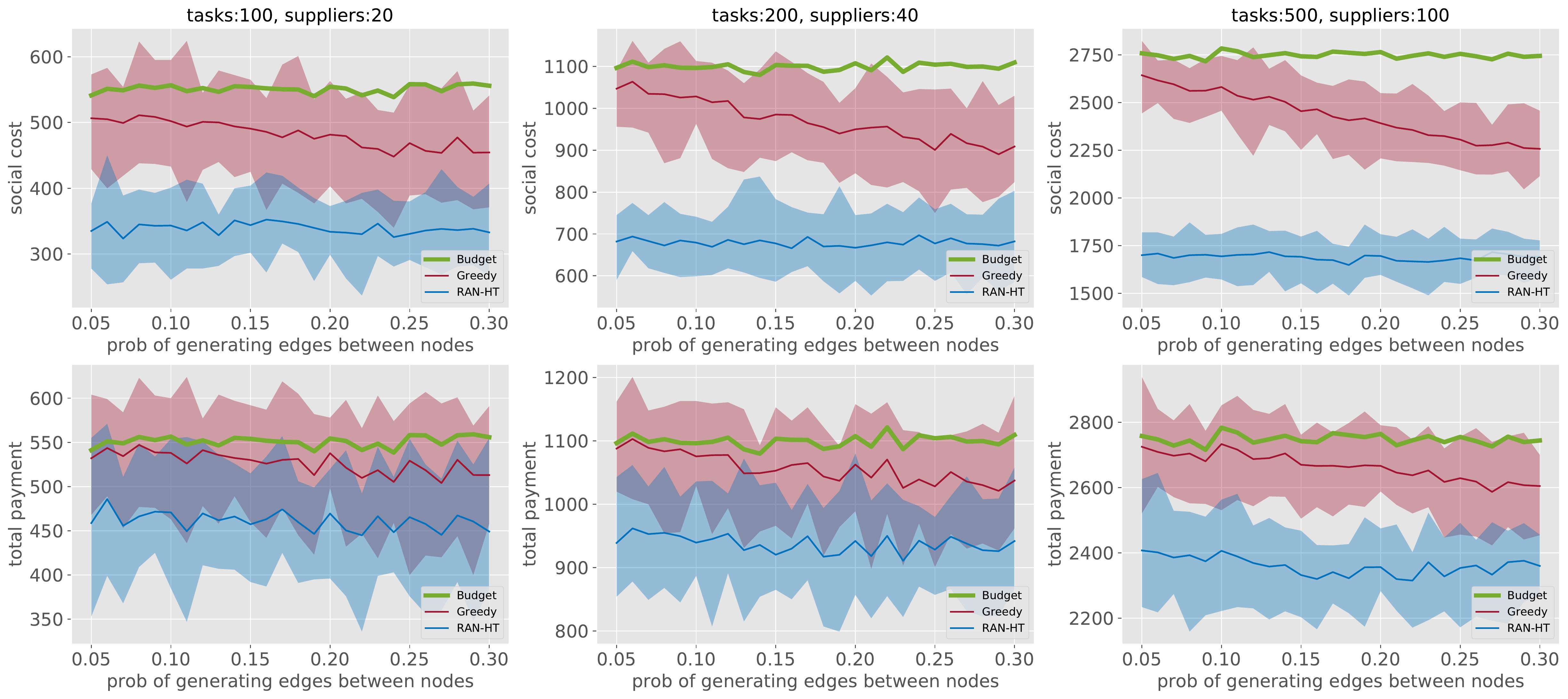}
    \caption{Social Cost and Payment under different graphs with different sparsity}
    \label{ran-1}
\end{figure}
% \noindent \textbf{Experimental results for different social networks}
 The bolded green line represents the total budget of the requester, i.e. the total cost for the requester to finish the tasks by herself. The red line and region represents the ordinary greedy mechanism under local markets while the blue line and region represents the RAN-HT mechanism. In row one, from the perspective of social cost, considering the greedy mechanism, with the connections between nodes in the graph become dense, the probability of the requester connecting to more neighbors become higher, leading to the decrease of social cost. In contract, RAN-HT is not sensitive with sparsity of social networks since in RAN-HT, when the current level market has no positive marginal value suppliers, it directly goes into the next market which means it tends to find some pretty good suppliers in deeper markets, which has little relation with the number of 'good' suppliers in first-level market. Also, from the perspective of market size, the performance distance between RAN-HT and the greedy mechanism gets larger with the scale of market get larger both from social cost and the requester's payment. 

% \noindent \textbf{Experimental setup for different amount of tasks}
\subsubsection{Scale of Tasks Domain}
To evaluate the influence of the amount of tasks that the requester posts, we implement simulations for three different markets: (1).suppliers:20; (2).suppliers:40; (3).suppliers:100. Each network is constructed by random graph with probability 0.05 to link two nodes. Also, for each market, we range the amount of tasks $|\boldsymbol{\tau}|$ with $[10,500]$. Settings not mentioned here are taken the same form as the simulation in the last part
\begin{figure}[!htbp]
    \centering
    \includegraphics[width=1\textwidth]{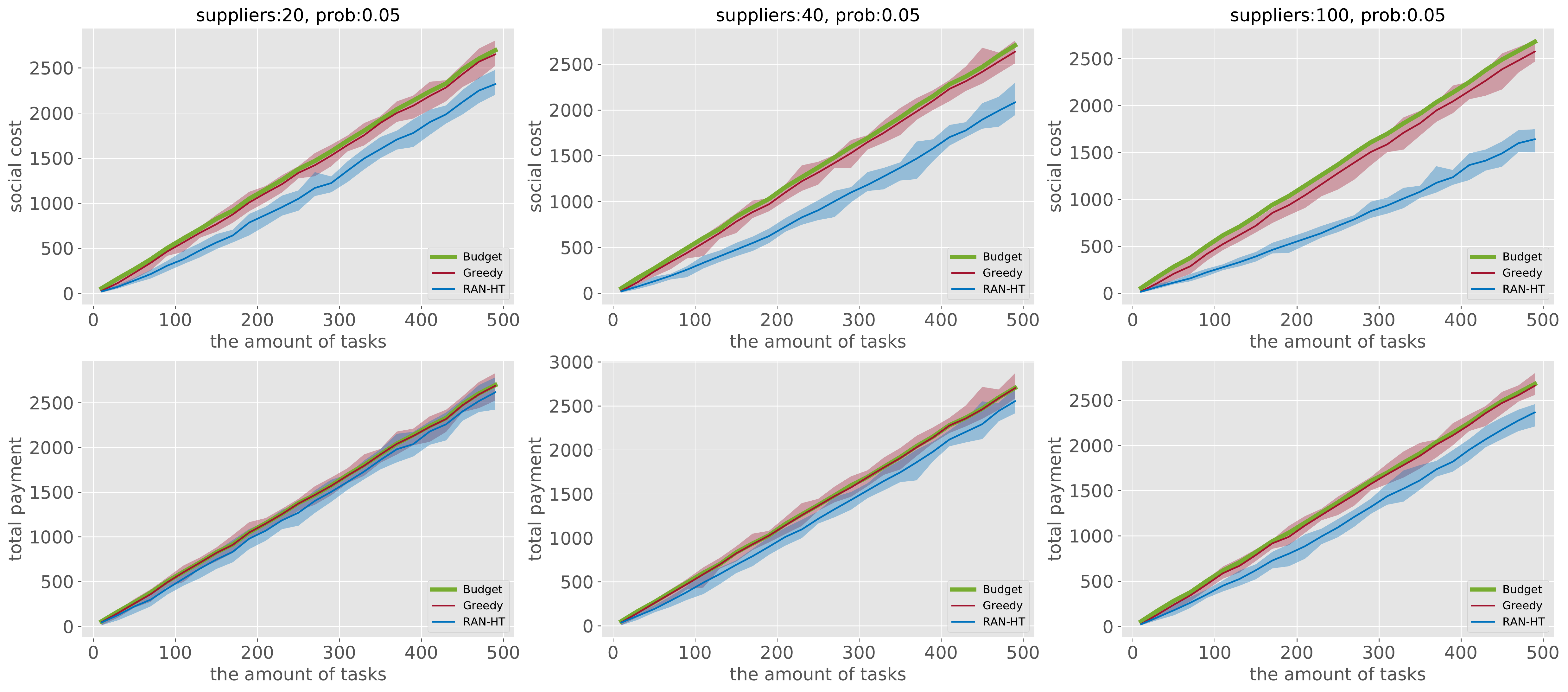}
    \caption{Social Cost and Payment under different amount of tasks}
    \label{ran-2}
\end{figure}

% \noindent \textbf{Experimental results for different amount of tasks}
The relationship between social cost or total payment and the amount of total tasks are shown in Figure.\ref{ran-2}. Generally speaking, the relationship between the amount of tasks and the social cost or the total payment is essentially linear. However, different mechanisms present different linear ratio of increase. RAN-HT dominates the greedy mechanism in social cost domain. More suppliers bring more clearly dominant performance. For the payment domain, when there are not enough suppliers, the change trend of total payment of the greedy mechanism and our RAN-HT are very similar which are both around the budget line. When the amount of suppliers gets larger, RAN-HT gradually outperforms the greedy mechanism, helping the requester save some payment for recruiting. 

% \noindent \textbf{Experimental setup for different amount of suppliers}
\subsubsection{Scale of Suppliers Domain}

The last part for our simulation is to explore suppliers' effects on the social cost and the requester's payment. Similarly, to make comparison, we conduct three different markets: (1). tasks:100; (2). tasks:200; (3). tasks: 500. Each market is constructed by random graph with probability 0.05 of linking two nodes. For suppliers domain, we range it from $10$ to $100$ with stepping by $5$. Each group of parameters simulates for $20$ times. The bolded line represents average value for each mechanism and the upper bound and lower bound of the shaded part shows the maximum value and minimum value for each test. 
\begin{figure}[!htbp]
    \centering
    \includegraphics[width=1\textwidth]{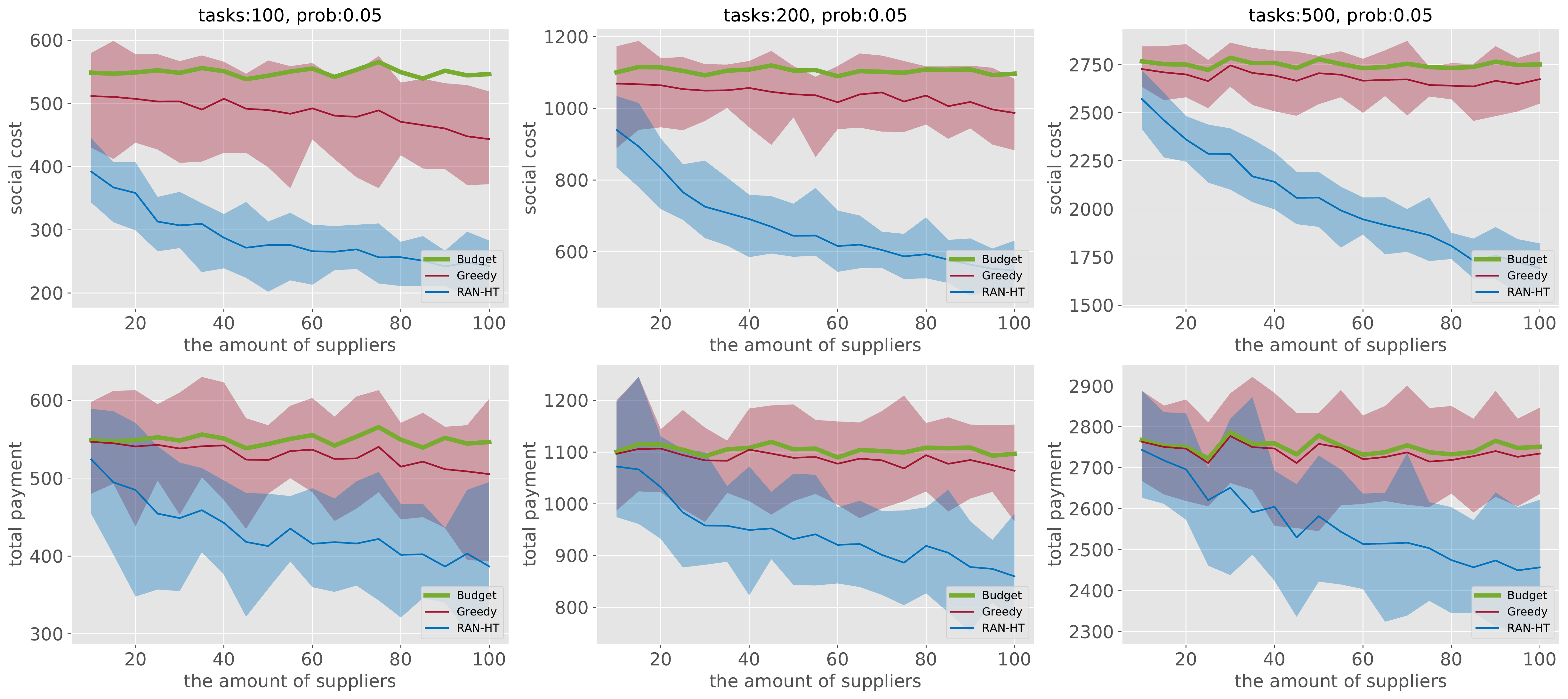}
    \caption{Social Cost and Payment under different amount of suppliers}
    \label{ran-3}
\end{figure}

% \noindent \textbf{Experimental results for different amount of suppliers}

It is obvious that our RAN-HT mechanism always outperforms the ordinary greedy mechanism. In each row, for both the social cost and the total payment, the gap between the experimental results of the two mechanisms is becoming increasingly apparent with the amount of suppliers in each market gets larger. For the greedy mechanism, since it is conducted in local markets where the requester's neighbors consists of the total suppliers set. In horizontal view, we would see that with the tasks number increase, the social cost and payment for the greedy mechanism are gradually moving closer to the budget. The reason behind it is mainly that although we construct larger market with more suppliers, the change of local market is quiet little, however, with the increase of tasks number, more tasks would be satisfied by the requester herself. In contract, with the market expanding, more potential suppliers with lower cost would be chosen by our RAN-HT, so RAN-HT finally results in better social cost and help the requester save some procurement cost.

\section{Conclusion and Discussion}
In this paper, we study the combinatorial reverse auction in social networks. Firstly, we analyse the importance of monotone allocation and critical bid for designing truthful mechanisms under network scenario, and reveal that the challenge for truthfulness lies in allocation monotonicity. Then, by identifying the critical winning cost, we design two mechanisms RAN-HM and RAN-HT which have monotone allocation and corresponding payment for homogeneous and heterogeneous goods. Our work explains that designing multi-item diffusion auction mechanisms under the guidance of allocation rule with valuation and diffusion monotonicity is crucial. One immediate future work is to design one truthful mechanism with strong incentive for propagating auction information. Moreover, exploring combinatorial auction via social network with other valuation types is also interesting. 

\bibliographystyle{plain}
\bibliography{ref}
\end{document}